\newcommand{\beq}{\begin{eqnarray} }
\newcommand{\eeq}{\end{eqnarray} }
\newcommand{\Beq}{\begin{eqnarray*} }
\newcommand{\Eeq}{\end{eqnarray*} }
\newcommand{\Bmat}{\left(\begin{matrix}}
\newcommand{\Emat}{\end{matrix}\right)}
 \useunder{\uline}{\ul}{}
\newtheorem{theorem}{Theorem}
\newtheorem{lemma}{Lemma}
\newtheorem*{proof}{Proof}
\newtheorem{corollary}{Corollary}
\newcommand{\RNum}[1]{\uppercase\expandafter{\romannumeral #1\relax}}
\begin{document}
\title{A Hamiltonian Approach for Obtaining Irreducible Projective Representations and the $k\cdot p$ Perturbation for Anti-unitary Symmetry Groups} 
\author{Zhen-Yuan Yang}
\affiliation{Department of physics, Renmin University, Beijing 100876, China.}

\author{Jian Yang} 
\affiliation{Beijing National Research Center for Condensed Matter Physics,
and Institute of Physics, Chinese Academy of Sciences, Beijing 100190, China}

\author{Chen Fang}
\affiliation{Beijing National Research Center for Condensed Matter Physics, and Institute of Physics, Chinese Academy of Sciences, Beijing 100190, China}

\author{Zheng-Xin Liu}
\email{liuzxphys@ruc.edu.cn}
\affiliation{Department of physics, Renmin University, Beijing 100876, China.}

\date{\today}
\begin{abstract}

As is known, the irreducible projective representations (Reps) of anti-unitary groups contain three different situations, namely, the real, the complex and quaternion types with torsion number 1,2,4 respectively. This subtlety increases the complexity in obtaining irreducible projective Reps of anti-unitary groups. In the present work, a physical approach is introduced to derive the condition of irreducibility for projective Reps of anti-unitary groups. Then a practical procedure is provided to reduce an arbitrary projective Rep into direct sum of irreducible ones. The central idea is to construct a hermitian Hamiltonian matrix which commutes with the representation of every group element $g\in G$, such that each of its eigenspaces forms an irreducible representation space of the group $G$. Thus the Rep is completely reduced in the eigenspaces of the Hamiltonian.  This approach is applied in the $k\cdot p$ effective theory at the high symmetry points (HSPs) of the Brillouin zone for quasi-particle excitations in magnetic materials. After giving the criterion to judge the power of single-particle dispersion around a HSP, we then provide a systematic procedure to construct the $k\cdot p$ effective model. 
\end{abstract}

\maketitle

\section{introduction}

Irreducible projective representations (IPReps) of groups, including the irreducible linear Reps as the trivial class of IPReps, play important roles in physics\cite{Schur,PRB2,PRB3,PRB4,PRB5,Slager2013,Barkeshli_2019}. In condensed matter physics, IPReps for discrete groups are widely used in obtaining selection rules or analyzing spectrum degeneracy\cite{GroupAppPhy}. For instance, in the band theory of itinerant electrons hopping in a crystal, the symmetry group is a space group whence the degeneracy of the energy spectrum at a momentum point is determined by the dimensions of IPReps of the little co-group\cite{Chen}. 

Owing to the importance of IPReps, it is urgent to judge if a Rep is reducible or not. For a finite unitary group $H$, a (projective) Rep $D(H)$ is irreducible if it satisfies the following condition, 
${1\over |H|} \sum_{h\in H} |\chi_h|^2=1,$
where $\chi^{(\nu)}(h)={\rm Tr} D^{(\nu)}(h)$ is the character of the element $h\in H$. When $D(H)$ is reducible, then ${1\over |H|} \sum_{h\in H} |\chi_h|^2 = \sum_{\nu} a_\nu^2$, where $a_\nu$ is the multiplicity of the irreducible Rep $(\nu)$ contained in $D(H)$. In this case, we need to transform it into a direct sum of irreducible Reps. The eigenfunction method\cite{Chen} is an efficient way of performing this reduction.

On the other hand, anti-unitary groups attract more and more interests. The well known Kramers degeneracy is a consequence of time-reversal symmetry for fermions with half-odd-integer spin. Time reversal also protects the helical gapless edge modes in topological insulators\cite{RMPKane,RMPZhang} or topological superconductors\cite{PRBRead,PRLZhang}. Especially, a large amount of materials in nature exhibit magnetic long-range order, the symmetries for some of these materials are described by anti-unitary groups called the magnetic space groups\cite{MathInSolids}, where the anti-unitary operations are generally combination of time reversal operation $T$ and certain unitary space-group element. The irreducible Reps (also called co-Reps) of the magnetic space groups are helpful to understand the properties of these materials. Especially, the low-energy physics of the quasi-particles at high symmetry points (HSPs) of the Brillouin zone (BZ) are characterized by the irreducible projective Reps of the little co-groups.

For anti-unitary groups, there are three types of irreducible Reps which are characterized by the torsion number.  Supposing that $M(G)$ is an irreducible Rep of an anti-unitary group $G$, and $H$ is the halving unitary subgroup $H\subset G$ with $G=H+T_0H$ ($T_0$ is anti-unitary). Then the torsion number is given by
$R={1\over|H|}\sum_{h \in H} |\chi(h)|^2,$ where $\chi(h)={\rm Tr}\ M(h)$ is the character of $h$. If $R=1$, the irreducible Rep $M(G)$ belongs to the real type; if $R=2$, then $M(G)$ belongs to the complex type; if $R=4$ then $M(G)$ belongs to the quaternion type\cite{CMP}. This subtlety  of anti-unitary groups increases the complexity in reducing an arbitrary projective Rep into the direct sum of irreducible ones, especially if some IPReps appea multipole times in the reucible Rep.

In the present paper, from a physical approach we derive the criterion to judge the irreducibility\cite{JMP} of a projective Rep $M(G)$ for a finite anti-unitary group $G$,
\Beq
{1\over |H|} \sum_{h\in H} {1\over2}\left[ \chi(h)\chi^*(h)  + {\rm Tr}[M(T_0h)M^*(T_0h)]\right] = 1, 
\Eeq
or equivalently
{\small
\begin{empheq}[box=\fbox]{align}\label{IR_condition}
{1\over |H|} \sum_{h\in H} {1\over2}\left[ \chi(h)\chi^*(h)  +\omega_2(T_0h, T_0h) \chi((T_0h)^2)\right] = 1, 
\end{empheq}
}where $\omega_2(T_0h, T_0h)$ is the factor system of the projective Rep.
In this approach, we consider Hermitian Hamiltonians in terms of single-particle bilinear operators which are commuting with all of the symmetry operations in $G$. If the only existing Hamiltonian is proportional to the identity matrix,  then the Rep $M(G)$ is irreducible. Otherwise, if there exist other linearly independent Hamiltonian, then $M(G)$ is reducible and the energies of the Hamiltonian can be used to distinguished each of the irreducible subspace. This provides an efficient method to reduce an arbitrary reducible Rep into a direct sum of irreducible ones. The advantage of the method is that no information of the irreducible Reps of the groups need to be known beforehand. We further generalize this approach to judge the power of the quasi-particle dispersions in magnetic semimetals, and then to obtain the $k\cdot p$ effective models \cite{Bardeen,Seitzkp} at the HSPs in the BZ.

The rest of the paper is organized as follows. In section \ref{sec:Crit}, we worm up by reviewing the IPReps of unitary groups, and then derive the formula (\ref{IR_condition}) for anti-unitary groups and interpret it in a physical Hamiltonian approach. In section \ref{sec:Reduction}, applying the Hamiltonian approach we provide the procedure to reduce an arbitrary Rep of finite groups (either unitary or anti-unitary) into a direct sum of IPReps.  In section \ref{sec:app}, we provide the criterion to judge if the degeneracy protected by IPReps of anti-unitary groups can be lift by certain perturbations or not, and then give the method to construct $k\cdot p$ effective Hamiltonian for magnetic materials. Section \ref{sec:sum} is devoted to the conclusions and discussions.  

Since any Rep of a finite group (no matter unitary or anti-unitary) can be transformed into a unitary one, in the present work we only discuss unitary Reps.

\section{A Hamiltonian approach: Condition for irreducible projective Reps}\label{sec:Crit}

\subsection{Unitary Groups}\label{sec: Untry}

Since the character of the identity Rep $(I)$ is $\chi^{(I)}(h)=1$ for any $h\in H$, the following quantity
\Beq
a_{(I)}^{(\nu\times \nu^*)}&=&{1\over |H|} \sum_{h\in H} |\chi^{(\nu)}_h|^2  \Big(\chi^{(I)}(h)\Big)^*\\
&=& {1\over |H|} \sum_{h\in H} {\rm Tr} [D^{(\nu)}(h)\otimes D^{(\nu)*}(h)],
\Eeq
stands for the multiplicity of the identity Rep appearing in the reduced Rep of the direct product $(\nu\times \nu^*)$, where ${(\nu)^*}$ is the complex conjugate of $(\nu)$. Then the condition of irreducibility of $(\nu)$ can be interpreted as the following: the direct product ${(\nu)\times (\nu)^*}$  contains only one identity Rep, namely $a_{(I)}^{(\nu\times \nu^*)}=1$. 

The expression $a_{(I)}^{(\nu\times \nu^*)}=1$ has a physical interpretation. Suppose the identical particle $\psi^\dag$ has $d$ internal components $\psi^\dag=(\psi_1^\dag,\psi_2^\dag, ..., \psi_{d}^\dag)$, which carries an Rep ${(\nu)}$ of the symmetry group $H$. This means that $\hat h\psi_i^\dag \hat h^{-1}= \sum_jD^{(\nu)}_{ji}(h)\psi_j^\dag$ or equivalently
\[
\hat h \psi^\dag \hat h^{-1} = \psi^\dag D^{(\nu)}(h).
\]
The hermitian conjugation gives $\hat h \psi \hat h^{-1} = [D^{(\nu)}(h)]^\dag \psi.$ The energy spectrum is described by the single-particle Hamiltonian
\beq\label{Ham}
\hat {\mathscr H}=\sum_i \psi_i^\dag \Gamma_{ij} \psi_j=\psi^\dag \Gamma \psi,
\eeq
where $\Gamma$ is an $d\times d$ matrix. The symmetry group $H$ means that the Hamiltonian is invariant under all the symmetry operations in the group $H$. In other words, for any $h\in H$, we have $\hat h \hat {\mathscr H}\hat h^{-1} = \hat {\mathscr H},$ which is equivalent to 
\beq\label{Gma}
D^{(\nu)}(h) \Gamma [D^{(\nu)}(h)]^\dag=\Gamma.
\eeq
Schur's lemma indicates that when $(\nu)$ is irreducible, then $\Gamma$ must be proportional to the identity matrix $\Gamma_0 \propto I$. If there exist another linearly independent matrix $\Gamma_1$ satisfying (\ref{Gma}), then it must have at least two eigenvalues. The eigenspace of each eigenvalue is closed under action of $H$ and hence form a Rep space of $H$. This means that the Rep $(\nu)$ is reducible. Therefore, if $I$ is the only one linearly independent matrix satisfying (\ref{Gma}), then the $d$-fold degenerate energy level of $\mathscr H$ cannot be lift and consequently $(\nu)$ is irreducible.

The equation (\ref{Gma}) can be expanded in the following form
\Beq
\sum_{j,k}\!D^{(\nu)}_{ij}\!(h) \Gamma_{jk} D^{(\nu)*}_{lk}\!(h) \!\!&=&\!\! \sum_{j,k}\! \left(\! D^{(\nu)}(h)\! \otimes \!D^{(\nu)*}(h) \! \right)_{il,jk}\!\! \Gamma_{jk}
\\ \!\!&=&\!\!\Gamma_{il}
\Eeq
for all $h\in H$.
If we reshape the matrix $\Gamma$ into an $d^2$-component  column vector (if the matrix $\Gamma$ is reshaped into the $d^2$-component vector column by column, then it should be transposed into $\Gamma^T$ before the reshaping), then this vector is the eigenvector of $D^{(\nu)}(h) \otimes D^{(\nu)*}(h)$ with eigenvalue 1 for all $h\in H$, $i.e.$ it carries the identity Rep of $H$. {\it In other words, the vector $\Gamma$ is the CG coefficient \cite{JMP2, JMP4} that combines the bases of $(\nu)$ and $(\nu)^*$ to a irreducible basis that belongs to the identity Rep $\chi^{(I)}(h)=1$. If $(\nu)$ is irreducible, then the CG coefficient is unique. }

Above discussion is valid no matter the Rep $(\nu)$ is  linear or projective. 

\subsection{Anti-unitary groups}\label{sec:anti}

In the following we generalize above approach to anti-unitary groups. Consider an anti-unitary group $G$ with $G=H+T_0 H$, where $H\in G$ is the halving unitary subgroup and $T_0$ is an anti-unitary element of the lowest order.
 
If $G$ is of type-I\cite{YangLiu}, namely, $T_0^2=E$, then $G$ is either a direct product group $G=H\times Z_2^T$  or a semi-direct product $G=H\rtimes Z_2^T$, where $Z_2^T=\{E,T_0\}$. If $T\in G$ (here $T$ is the time-reversal operation which commutes with all the other elements), then we choose $T_0=T$; otherwise, $T_0=u T$, where $u\notin G$ is a unitary operation satisfying $T_0^2=u^2=E$. 

On the other hand, if $G$ is of type-II, then $T_0^{2^n}=E$ with $n\geq 2$, hence $G$ cannot be written in forms of direct product or semi-direct product of a unitary group with $Z_2^T$. Obviously, the order of $T_0$ is at least 4 and $T_0^2\equiv\sigma$ is a unitary element in $H$, $\sigma\in H$. 

We consider an $d$-dimensional unitary projective Rep of $G$. Any element $g\in G$ is represented as $\hat g = M(g)K_{s(g)}$, which satisfies the relations $M^\dag(g)M(g)=I$ and
\Beq
M(g_1)K_{s(g_1)}M(g_2)K_{s(g_2)} = \omega(g_1,g_2)M(g_1g_2)K_{s(g_1g_{2})},
\Eeq
where $s(g)=1$, $K_{s(g)}=K$ if $g$ is anti-unitary and $s(g)=0$, $K_{s(g)}=I$ otherwise. The factor system $\omega_2(g_1,g_2)$ satisfies the cocycle equation $$\omega^{s(g_1)}(g_2,g_3) \omega^{-1}(g_1g_2,g_3) \omega(g_1,g_2g_3)\omega^{-1}(g_1,g_2) = 1.$$ Now we derive the condition for the irreducibility of $M(g)K_{s(g)}$.

\subsubsection{General Discussion}

Since unitary group elements are easier to handle, we expect that the irreducibility can be judged from the restrict Rep of the subgroup $H$. Noticing that $M(H)$ is possibly reducible even if $M(G)$ is irreducible, we have
\Beq \label{Htrace}
{1\over|H|}\sum_{h\in H} {\rm Tr} [M(h)\otimes M^*(h)]\geq 1.
\Eeq 
Actually $P^{(I)}={1\over|H|}\sum_{h\in H} M(h)\otimes M^*(h)$ is the projector onto the subspace of identity Reps contained in the direct product Rep $M(h)\otimes M^*(h)$. The eigenvalues of $P^{(I)}$ are either 1 (which occurs at least once) or 0, hence ${\rm Tr} P^{(I)}\geq1$.

We need to find a way to include the restrictions from the anti-unitary group elements. Adopting the physical argument as discussed in Sec. \ref{sec: Untry}, we consider a $d$-component particle $\psi^\dag$ which carries the (co-)Rep of $g\in G$, 
$$
\hat g\psi^\dag \hat g^{-1}= \psi^\dag M(g)K_{s(g)}.
$$
The Hamiltonian takes the same form of (\ref{Ham}), which is invariant under the action of all the group elements, $\hat g \hat {\mathscr H}\hat g^{-1}=\hat {\mathscr H}$, namely, 
\beq
&& M(h) \Gamma M(h)^\dag =\Gamma,\ \ \ \ \ \ \ h\in H \label{UH}\\
&& M(T_0) \Gamma^* M(T_0)^\dag=\Gamma. \label{T0}
\eeq
Similar to the discussion for unitary groups, the $\Gamma$ matrix can be considered the CG coefficient that combines the product Rep $M(g)\otimes M^*(g)K_{s(g)}, g\in G$ (a linear Rep) into the identity Rep. Since the identity matrix obviously satisfies the above two equations, the product Rep contains at least one identity Rep. We expect that the identity matrix is the unique linearly independent matrix satisfying (\ref{UH}) and (\ref{T0}) if the Rep $M(g)K_{s(g)}, g\in G$ is irreducible.

However, above statement is too strong for anti-unitary groups. We need one more constraint for $\Gamma$. Notice that if a matrix commutes with an irreducible (projective) Rep of an anti-unitary group, then this matrix may have two eigenvalues which are mutually complex conjugate to each other \cite{YangLiu}. To generalize the Schur's lemma to anti-unitary groups, the matrix $\Gamma$ needs to be Hermitian. Namely, if an Hermitian matrix commutes with the irreducible projective Reps of all the group elements of an anti-unitary group, then this matrix must be proportional to the identity matrix. 

Hence, in addition to (\ref{UH}) and (\ref{T0}), we should further require that
\beq\label{Herm}
\Gamma^\dag = \Gamma.
\eeq
If a non-hermitian matrix $\Gamma$ satisfies (\ref{UH}) and (\ref{T0}), then obviously its hermitian conjugate $\Gamma^\dag$ also does. Therefore, the linear combination $(\Gamma+\Gamma^\dag)$ is the required hermitian matrix \footnote{ The linear combinations $\Gamma_\pm = (\Gamma \pm \Gamma^\dag)/2$ still satisfy the relations (\ref{UH}) and (\ref{T0}). Now $\Gamma_\pm^\dag =\pm \Gamma_\pm$, meaning that $\Gamma_+$ is Hermitian and $\Gamma_-$ is anti-Hermitian. So $\Gamma_-$ violates (\ref{Herm}). On the other hand, if we transform $\Gamma_-$ into an hermitian matrix $i\Gamma_-$, then $M(T_0) (i\Gamma_-) ^* M^\dag (T_0) = - (i\Gamma_-)$, namely, the hermitian matrix $i\Gamma_-$ forms an eigenstate of $M(T_0)\otimes M^*(T_0)K$ with eigenvalue $-1$, which violates (\ref{T0}).}.

Therefore, when making using of the characters of the unitary subgroup $H$ to judge the irreducibility of $M(g)K_{s(g)}, g\in G$, we need a projection operator $P_{HT_0}=P_H P_{T_0}$ to project onto the  subspace formed by hermitian and $T_0$ symmetric matrices.  $P_{HT_0}$ is equivalent to project onto the eigenvectors of $M(T_0)\otimes M^*(T_0)K$ with eigenvalue 1 with the condition that the matrix form of these eigenvectors are hermitian. 

Therefore, considering (\ref{UH}), (\ref{T0}) and (\ref{Herm}), the irreducibility requires that
\beq \label{Irrep}
{\rm Tr} ( P^{(I)} P_{HT_0}) \!=\! {1\over|H|}\sum_{h\in H} {\rm Tr} [ M(h)\!\otimes\! M^*(h)P_{HT_0}]=1,
\eeq
namely, when projecting onto the hermitian and $T_0$ symmetric subspace, the identity Rep  only appears once in the direct product Rep $M(H)\otimes M^*(H)$.

Eq. (\ref{Irrep}) is a general expression of the criterion that a Rep of anti-unitary groups should meet if it is irreducible. However, the construction of the projection operator $P_{HT_0}$ is not straightforward. In the following we first consider a relatively simple case, $i.e.$ the type-I anti-unitary groups, and then generalize the conclusion to arbitrary anti-unitary groups.

\subsubsection{type-I anti-unitary groups}\label{TypeI}

For type-I anti-unitary groups with $T_0^2\equiv \sigma=E$, situations are much simpler.  For a unitary Rep, we have $M^\dag(T_0)M(T_0)=I$. On the other hand, $T_0^2=E$ indicates
$$
[M(T_0)K]^2 = M(T_0)M^*(T_0)= \eta_{0}\equiv \omega_2(T_0,T_0),
$$
where $\eta_0=\pm1$ is an invariant of the projective Rep of type-I anti-unitary groups. 

Under the hermitian condition (\ref{Herm}), the transpose of (\ref{T0}) yields $ \Gamma^T = M^*(T_0) \Gamma M^T(T_0)$, namely 
$$
M(T)\Gamma^T=M(T_0)M^*(T_0) \Gamma M^T(T_0) = \eta_0 \Gamma M^T(T_0).
$$ 
Defining $\tilde \Gamma=\Gamma M^T(T_0),$ then we have
\beq\label{ST0}
\tilde\Gamma^T=\eta_{0}\tilde \Gamma.
\eeq
This means that $\tilde \Gamma$ is either symmetric (if $\eta_0=1$) or anti-symmetric (if $\eta_0=-1$). This symmetry condition is a direct consequence of the anti-unitary symmetry condition (\ref{T0}) and hermiticity condition (\ref{Herm}). In the following, we will say $\tilde\Gamma$ to be $\eta_0$-symmetric if it satisfies  (\ref{ST0}). 

Since $\Gamma=\tilde \Gamma  M^*(T_0)$, we rewrite the Hamiltonian as
\[
\hat {\mathscr H} = \psi^\dag \tilde \Gamma  M^*(T_0)\psi = \psi^\dag \tilde \Gamma \tilde \psi,
\]
then the basis $\psi$ undergoes a unitary transformation $\psi\to \tilde \psi =M^*(T_0)\psi$. Under the action of $h\in H$, $\tilde \psi$ vary as $\hat h \tilde \psi \hat h^{-1}= M^*(T_0) M^\dag(h)\psi=M^*(T_0)M^\dag (h)M^T(T_0)\tilde \psi$. 
For convenience, we define the following Rep for $h\in H$,
\beq\label{Fh}
F(h)&=&M(T_0)M^*(h)M^\dag(T_0), 
\eeq
which is equivalent to $M^*(h)$ with ${\rm Tr\ } F(h) = {\rm Tr\ } M^*(h)=\chi^*(h)$. Accordingly, $\tilde \psi$ vary as $\hat h \tilde \psi \hat h^{-1}= F^T(h)\tilde \psi$. Hence, the condition $\hat h \hat {\mathscr H} \hat h^{-1} = \hat {\mathscr H}$ requires that
\beq\label{Hinv}
M(h) \tilde \Gamma F^T(h) &=& \tilde \Gamma,
\eeq
which is the deformation of (\ref{UH}). Similarly, (\ref{T0}) is transformed into 
\beq\label{T0tG}
M(T_0)\tilde\Gamma^*M^T(T_0)=\tilde\Gamma.
\eeq

As before, $\tilde \Gamma$ can be considered as the CG coefficient that couples the direct product Rep $ V(g) = M(g)\otimes F(g)K_{s(g)}, g\in G$ (a linear Rep) to the identity Rep, namely
\beq\label{eigMF}
\sum_{ij} V_{kl,ij}(g)K_{s(g)} \tilde \Gamma_{ij} &=& \sum_{ij} M_{ki}(g) F_{lj}(g) K_{s(g)}\tilde \Gamma_{ij} \notag\\
&=&\tilde\Gamma_{kl}K_{s(g)},
\eeq
with $V(T_0)=M(T_0)\otimes M(T_0)$. If the CG coefficient matrix $\tilde\Gamma$ satisfies the $\eta_0$-symmetry condition (\ref{ST0}), then we only need to consider the unitary elements $h\in H$.

Obviously, $\tilde \Gamma=M^T(T)$  ({\it i.e.} $\Gamma=I$) satisfies the relations (\ref{ST0}) and (\ref{Hinv}). The irreducibility of $M(g)K_{s(g)}, g\in G$  indicates that there is a unique linearly independent solution. In other words, when projected onto the $\eta_0$-symmetric subspace by the projection operator $P_{\eta_0}$, the product Rep $V(h) = M(h)\otimes F(h)$ only contains a single identity Rep, 
\beq\label{Irrep-I}
a_{(I)}^{[M\otimes F]_{\eta_0}} = {1\over |H|} \sum_{h\in H}{\rm Tr}\left[ M(h)\otimes F(h) P_{\eta_{0}}\right]=1.\ \ \ 
\eeq

To obtain the matrix form of $[M(h)\otimes F(h) P_{\eta_{0}}]$, we devide $M(h)\otimes F(h)$ into two parts,
\Beq
\sum_{ij}\big(M(h)\!\!\!\!\!\!\!\!\!\!\!\! &&\otimes F(h)\big)_{kl,ij} \tilde \Gamma_{ij}\\
 &= &\sum_{ij}{1\over2}\big[M_{ki}(h) F_{lj}(h) + \eta_0M_{kj}(h) F_{li}(h) \big]\tilde \Gamma_{ij}+\\
&& \sum_{ij}{1\over2}\big[M_{ki}(h) F_{lj}(h) - \eta_0M_{kj}(h) F_{li}(h) \big]\tilde \Gamma_{ij}.
\Eeq
Noticing that the second summation on the righthand side vanishes owing to $\tilde\Gamma_{ij}=\eta_0\tilde\Gamma_{ji}$, so we have
\beq\label{MFeta0}
[V_{\eta_{0}} (h)]_{kl,ij}= \frac{1}{2}  \big(  M_{ki}(h) F_{lj}(h)  +  \eta_{0} M_{kj}(h) F_{li}(h) \big),
\eeq
where we have used the notation $V_{\eta_0}(h) = [M(h)\otimes F(h)P_{\eta_0}]$.

Introducing the unit twist matrix 
\[
(\mathscr T)_{kl,ij}=\delta_{kj}\delta_{li}
\]
with $(X\mathscr T)_{kl,ij}=X_{kl,ji}$ for an arbitrary matrix $X$, then the projection operator $P_{\eta_0}$ can be expressed as 
$$P_{\eta_0}={1\over2}(I+\mathscr T_{\eta_0}),$$ 
where $I$ is the identity matrix and $\mathscr T_{\eta_0}=\eta_0\mathscr T$. Hence $V_{\eta_0}(h)$ in (\ref{MFeta0}) can be written as 
\[
V_{\eta_0}(h) = \big( M(h)\otimes F(h)\big) P_{\eta_0} = {1\over2}V(h)(I+\mathscr T_{\eta_0}).
\]

Although $V_{\eta_0}(h)$ does not form a Rep of $H$,  the common eigenvector of $V_{\eta_0}(h)$ with eigenvalue 1 does carries the identity Rep of $H$ (see Theorem \ref{Theorem1} for the special case in which $\sigma=E$). Defining $P^{(I)}={1\over|H|} \sum_{h\in H} V(h)$, above statement indicates that if $\tilde\Gamma$ satisfies 
\beq\label{VGamma}
P^{(I)}P_{\eta_0} \tilde\Gamma_{ij}=\tilde\Gamma_{kl},
\eeq
then it simultaneously satisfies the relations (\ref{ST0}) and (\ref{Hinv}). Furthermore, by choosing proper bases in the supporting space of $P^{(I)}P_{\eta_0}$, the $T_0$-symmetry condition (\ref{T0tG}) and finally the hemiticity condition (\ref{Herm}) can be ensured (see Appendix \ref{app:Peta0} for details). Thus the criterion (\ref{Irrep-I}) for the irreducibility is valid.

From the matrix form in  (\ref{MFeta0}),  the criterion (\ref{Irrep-I}) can be expressed in terms of the characters $\chi(h)={\rm Tr\ }M(h)$ of the unitary elements $h\in H$, namely
\Beq
{1\over |H|} \sum_{h\in H} {1\over2}\Big( \chi(h) \chi^*(h) + \eta_0 {\rm Tr\ }[F(h)M(h)] \Big) = 1,
\Eeq
where ${\rm Tr\ }F(h)=\chi^*(h)$ has been used. Furthermore, by denoting  $\bar h=T_0^{-1}hT_0=T_0hT_0$, we have
\Beq
F(h)&=&M(T_0)M^*(h)M^\dag(T_0)\nonumber\\
&=&\omega_2(T_0,h)\omega_2(T_0h,T_0)\omega_2(T_0,T_0) M(\bar h).
\Eeq
Above can be further simplified using the cocycle relation $\omega_2^{-1}(T_0,h)\omega_2^{-1}(\bar h,h)\omega_2(T_0h,T_0h)\omega_2^{-1}(T_0h,T_0)=1$, 
which yields $\omega_2(T_0,h)\omega_2(T_0h,T_0)={\omega_2(T_0h,T_0h)\over \omega_2(\bar h,h)}$. Noticing that $\eta_0^2=1$, therefore we have
\Beq
\eta_0 {\rm Tr\ }[F(h) M(h)]&=& {\omega_2(T_0h,T_0h)\over \omega_2(\bar h,h)} {\rm Tr\ }[M(\bar h) M(h)]\\
&=& \omega_2(T_0h,T_0h) {\rm Tr\ }[M(\bar hh)]\\
&=&\omega_2(T_0h,T_0h) \chi ((T_0h)^2)\\
&=&{\rm Tr\ } [M(T_0h)M^*(T_0h)].
\Eeq

Finally, we reach the simplified irreducible condition 
\begin{empheq}[box=\fbox]{align} \label{Anti}
{1\over 2|H|} \sum_{h\in H}\! \Big( \chi(h)\chi^*(h)  +\omega_2(T_0h, T_0h) \chi\big((T_0h)^2\big)\Big) = 1. 
\end{empheq}
Above expression is independent on the gauge choice of the projective Rep. The factor system $\omega_2(T_0h, T_0h)$ can be avoided by the replacement $\omega_2(T_0h, T_0h) \chi\big((T_0h)^2\big) = {\rm Tr\ } [M(T_0h)M^*(T_0h)]$.

In the following we show that above condition of irreducibility also works for type-II anti-unitary groups. 

\subsubsection{type-II anti-unitary groups}

For type-II anti-unitary groups, we denote $T_0^2\equiv\sigma$. Similar to previous discussion, if we define $\tilde \Gamma=\Gamma M^T(T_0)$, 
then (\ref{Hinv}) and (\ref{T0tG}) are the conditions $\tilde \Gamma$ should satisfy. Furthermore, recalling $\Gamma$ is hermitian, we have
\begin{align}\label{GeneralSymConduction}
\tilde \Gamma^T = M(T_0)\Gamma^T&=M(T_0)M^*(T_0) \Gamma M^T(T_0) \notag\\
&=\eta_{0}M(\sigma)\tilde \Gamma.
\end{align}
The self-consistency condition $(\tilde \Gamma^T)^T=\tilde \Gamma$ requires that $\tilde\Gamma=M(\sigma)\tilde \Gamma F^T(\sigma)$ with $F(\sigma)=\eta_0^2M(\sigma)$, or equivalently
$$V(\sigma)\tilde\Gamma = M(\sigma)\otimes F(\sigma)\tilde\Gamma = \tilde\Gamma.$$ 
This requirement is guaranteed if (\ref{Hinv}) is satisfied.

Therefore, the $T_0$ symmetry condition (\ref{T0}) and the hermiticity condition (\ref{Herm}) combine to a single restriction (\ref{GeneralSymConduction}), i.e. $\tilde \Gamma=[\eta_0M(\sigma) \tilde \Gamma]^T$. Now we define a generalized twist operator $\mathscr T_{\eta_0}$ which transforms $\tilde\Gamma$ into $[\eta_0M(\sigma) \tilde \Gamma]^T$,
\beq\label{scrTeta0}
\mathscr T_{\eta_0} = \eta_0 \mathscr T [M(\sigma)\otimes I ].
\eeq
It holds that $(\mathscr T_{\eta_0})^2= M(\sigma)\otimes F(\sigma)=V(\sigma)$. Thus $\mathscr T_{\eta_0}$ defines a generalized `transpose' of $\tilde\Gamma$ given that $V(\sigma)\tilde\Gamma=\tilde\Gamma$ is satisfied. 

When projected to the eigenspace of $V(\sigma)$ with eigenvalue 1,  the operator $\mathscr T_{\eta_0}$ has eigenvalues $\pm1$. Therefore the projector onto the generalized $\eta_0$-symmetric subspace is given by 
\Beq
P_{\eta_0}= {1\over2}P_{\sigma} (I + \mathscr T_{\eta_0}), 
\Eeq
where $P_{\sigma}$ is a projection onto the eigenspace of $V(\sigma)$ with eigenvalue 1. Defining the projector onto the subspace of identity Reps as $P^{(I)}={1\over |H|}\sum_{h\in H} V(h)$, then it is obvious that $P^{(I)}P_\sigma=P^{(I)}$. Hence the irreducibility condition (\ref{Irrep-I}) can be written as 
\begin{align}\label{typeII:criterion}
a_{(I)}^{[M\otimes F]_{\eta_0}} = {\rm Tr }(P^{(I)}P_{\eta_0})={1\over2}{\rm Tr }\big(P^{(I)}(I+\mathscr T_{\eta_0})\big)=1.
\end{align}
After some calculations, above criterion of irreducibility can be simplified to the same form as (\ref{Anti}) (see Appendix \ref{app:proof}).

From the definition of torsion number of irreducible Reps and the equation (\ref{Anti}), one can easily verify the following relation for any anti-unitary group $G=H+T_0H$,
\Beq
{1\over |H|} \!\!\sum_{h\in H} \!\!\omega_2(T_0h, T_0h) \chi\big((T_0h)^2\big) \!\!&=&\!\! {1\over |H|}\!\! \sum_{u\in T_0\!H}\!\!\!{\rm Tr} [M(u)M^*u)]
\\ \!\!&=&\!\! \left\{ \begin{aligned}  &1,&{\ \ \ \ \rm if}\ R=1 \\ &0,& {\ \ \ \ \rm if}\ R=2\\&-1,& {\ \ \ \ \rm if}\ R=4 \end{aligned}\right. ,
\Eeq
which provides another way to obtain the torsion number.

\section{Hamiltonian approach for the Reduction of Projective Reps} \label{sec:Reduction} 
The criterion of judging the irreducibility actually provides a practical procedure to reduce reducible projective Reps of finite groups. In the following, we discuss unitary groups and anti-unitary groups separately. \\

\subsection{Reduction of Reps for Unitary groups}\label{sec:UnitRed}

For a general hermitian Hamiltonian matrix $\Gamma$ satisfying (\ref{Gma}), each of its eigenspace is an irreducible subspace of the unitary group $H$. Namely, the eigenvalues of $\Gamma$ can be used to label the irreducible projective Reps of $H$. In order to simultaneously block diagonalize the restrict Rep of $H$ and its subgroups, we can make use of the class operators of $H$ and those of its subgroups\cite{Chen} to lift the degeneracy of $\Gamma$.  

Therefore, the central step is to construct the hermitian Hamiltonian matrix $\Gamma$. Here we summarize the reduction procedure in the following three steps:\\
\indent(1) Obtain the subspace $\mathcal L^{(I)}$ which carries the identity Reps of $M(H)\otimes M^*(H)$, namely, find all the bases $v^{(I)i}\in \mathcal L^{(I)}$ such that for any group element ${ h\in H}$,
\Beq
M(h)\otimes M^*(h) v^{(I)i} = v^{(I)i}; 
\Eeq
\indent(2) Chose an arbitrary basis $v=\sum_i r_i v^{(I)i} \in \mathcal L^{(I)}$, where $r_i\in \mathbb R$ are arbitrary real numbers, reshape $v$ into a matrix $\Gamma_0$, namely
\Beq
(\Gamma_0)_{ab} = v_{a(N-1)+b};  
\Eeq
and then construct an hermitian matrix $\Gamma = (\Gamma_0+\Gamma_0^\dag) + i (\Gamma_0-\Gamma_0^\dag)$;\\
\indent(3) Diagonalize the class operators $C$ of $M(H)$, and the class operators $C(s)$ of its subgroup chain $H_1\subset H_2\subset...\subset H$, and the matrix $\Gamma$ simultaneously,
\Beq
\Bmat C\\C(s)\\ \Gamma \Emat \phi^{(\nu)_{\varepsilon}}_m = \Bmat \nu\\ m\\ \varepsilon\Emat \phi^{(\nu)_{\varepsilon}}_m,
\Eeq 
then the eigenvectors $\phi^{(\nu)_{\varepsilon}}_m$ are the irreducible bases. The eigenspace of `energy' $\varepsilon$ is an irreducible Rep space, thus we can use the energy $\varepsilon$ to label the multiplicity $(\nu)_{\varepsilon}$ if the IPRep $(\nu)$ occurs more than once. The class operators are defined as the following\cite{Chen,YangLiu}
\beq\label{Ci}
C_i = \sum_{h_a\in H} M(h_a) M(h_i) M^\dag(h_a), 
\eeq
and $C$ is a linear combination of $C_i$ with $C=\sum_i r_i C_i$  where $r_i\in \mathbb R$ are arbitrary real numbers. The operators $C(s)$ are defined in a similar way, which are used to lift the degeneracy of the eigenvalues and to reduce the restricted Reps of the subgroups on the chain $H_1\subset H_2\subset...\subset H$.

In the first step, the eigenvectors of $M(h)\otimes M^*(h), h\in H$ with eigenvalue 1 are required. When the dimension $N$ of $M(h)$ is large, it seems that one need to solve the eigenstates of matrices with dimension $N^2$. Actually, this complexity can be avoided in two ways.  
  
One way is to obtain the eigenvectors of $M(h)\otimes M^*(h)$ from the eigenstates of $M(h)$ and $M^*(h)$. Since the eigenvalues of $M(h)\otimes M^*(h)$ are the product of the eigenvalues of $M(h)$ and $M^*(h)$, the eigenvectors of the product matrix with eigenvalue 1 is the direct product of the eigenstates of $M(h)$ and $M^*(h)$ whose eigenvalues are mutually complex conjugate. For all the elements $h\in H$ we can construct the eigenspace of $M(h)\otimes M^*(h)$ with eigenvalue 1 in the same way, then any state in the intersection of such eigenspaces satisfies the condition (1).

The other way is to construct the matrix $\Gamma_0$ directly, 
$$
\Gamma_0=\sum_{h\in H} M(h)AM^{\dag}(h),
$$
where $A$ is an arbitrary square matrix\footnote{ If we regard $A$ as a vector, then the above equation can be written as $(\Gamma_0)_{ij} = \sum_{h\in H}  M_{ik}(h)M_{jl}^*(h)A_{kl} = |H|P_{ij,kl}^{(I)}A_{kl}$, where $P^{(I)}={1\over |H|}\sum_{h\in H}M(h)\otimes M^*(h)$ is the projection operator projecting onto the subspace of identity Reps contained in the direct product Rep $M(H)\otimes M^*(H)$.}.  Obviously above $\Gamma_0$ satisfies the commutation relation $M(h)\Gamma_0 M^\dag (h)=\Gamma_0$, which is equivalent to the eigen problem $M(h)\otimes M^*(h)v= v$ with the vector $v$ reshaped from $\Gamma_0$. Therefore, thus constructed matrix satisfies the conditions in step (1) and step (2). Practically  this method is  more straightforward.

\subsection{Reduction of Reps for anti-unitary groups}\label{sec:redau}
The same idea can be generalized to reduce general Reps of anti-unitary groups $G=H+T_0H$. For a general matrix $\Gamma$ satisfying  (\ref{UH}), (\ref{T0}) and (\ref{Herm}), each of its eigenspaces is an irreducible projective Rep space of $G$. To lift the degeneracy of the eigenvalues of $\Gamma$ , we can make use of the class operators of $H$ and those of its subgroups.  

The central step is to construct the hermitian Hamiltonian matrix $\Gamma$ satisfying the restrictions (\ref{UH}), (\ref{T0}) and (\ref{Herm}). We summarize the reduction procedure as the following:\\
\indent(1) Following the method in section \ref{sec:UnitRed}, obtain a matrix $\Lambda_0$ which is commuting with $M(h), h\in H$, and then construct a hermitian matrix $\Lambda=(\Lambda_0 + \Lambda_0^\dag)  + i(\Lambda_0 - \Lambda_0^\dag)$;\\
\indent(2) Construct a matrix $\Gamma$ from $\Lambda$ 
\Beq
\Gamma &=&  \Lambda  + M(T_0)K \Lambda KM^\dag(T_0)\\
&=&\Lambda  + M(T_0) \Lambda^* M^\dag(T_0).
\Eeq
It is easily verified that $\Gamma M(T_0)K = M(T_0)K\Gamma$ because $T_0^2\in H$,  $M(T_0^2)\Lambda =\Lambda M(T_0^2)$, and that $[M(T_0)K]^2 =\omega_2(T_0,T_0)M(T_0^2)$. Furthermore, noticing that $hT_0=T_0(T_0^{-1}hT_0)$ and that $(T_0^{-1}hT_0)\in H$, it can be shown that $\Gamma$ commutes with $M(h)$ for all $h\in H$; \\
\indent(3) Simultaneously diagonalize the class operators  $C$ of $M(H)$ [see (\ref{Ci}) for definition], the class operators $C(s)$ of the subgroup chain $H_1\subset H_2\subset...\subset H$, and the Hamiltonian matrix $\Gamma$,
\Beq
\Bmat C\\C(s)\\ \Gamma \Emat \phi^{(\nu)_\varepsilon}_m = \Bmat \nu\\ m\\ \varepsilon\Emat \phi^{(\nu)_\varepsilon}_m,
\Eeq 
then the eigenvectors $\phi^{(\nu)_\varepsilon}_m$ are the irreducible bases, where the bases with the same `energy' $\varepsilon$ belong to the same irreducible Rep-space $(\nu)_\varepsilon$. 

If IPReps with torsion number $R=4$ are contained in $M(G)$ after the reduction, then the restricted Rep of $H$ in each of the $R=4$ IPRep is a direct sum of two identical copies of irreducible Reps of $H$. However, both $\Gamma$ and $C$ can only provide a single eigenvalue in the IRRep of $G$. Therefore the quantum number $m$ in step (3) are doubly degenerate. In this case,  we can use the hermitian matrix $\Lambda$ to distinguish the two identical irreducible Reps of $H$. It is obvious that $\Lambda$ commutes with $\Gamma, C$ and $C(s)$, so 
we can add it to the commuting operators in step (3), 
\Beq
\Bmat C\\\Lambda, C(s)\\ \Gamma\Emat \phi^{(\nu)_{\varepsilon}}_{\varepsilon_H,m} = \Bmat \nu\\ \varepsilon_H, m\\ \varepsilon\Emat \phi^{(\nu)_{\varepsilon}}_{\varepsilon_H, m},
\Eeq 
then all the degeneracies are lifted. 

Notice that we have used the class operators of $H$ to define the class operator $C$ for simplicity. The eigenvalues $\nu$ are not necessarily real (it is not real if $R=2$). In this case the eigenspaces of $\nu$ and $\nu^*$ belong to the same IPRep of $G$. One can also adopt the class operators of the total group $G$, $C_{i+} = C_{h_i} + C_{T_0h_iT_0^{-1}} + C_{h_i^{-1}} + C_{T_0h_iT_0^{-1}}, C_{i-} = i(C_{h_i} + C_{T_0h_iT_0^{-1}} - C_{h_i^{-1}} - C_{T_0h_iT_0^{-1}})$ to construct $\mathcal C= \sum_i(r_{i+} C_{i+} + r_{i-} C_{i-})$ \cite{YangLiu}, where $C_{h_i}$ is the class operator of $H$ in the restricted Rep and $r_{i\pm}\in\mathbb R$ are real numbers.  Then the eigenvalues of $\mathcal C$ are always real numbers, but in this case the operators $\mathcal C(s)$ should include the class operators of $H$ and those of its subgroups.\\

\section{Application of the Hamiltonian approach in perturbation theory}\label{sec:app}

The Hamiltonian approach can be generalized to obtain the response of the system to symmetry breaking probe fields if the low-energy physics is dominated by particle-like excitations, such as the electron-like quasiparticles in metals, Bogoliubov quasi-particles in superconductors or the magnon excitations in the spin sector. We restrict our discussion to irreducible projective Reps of anti-unitary groups. 

\subsection{$ k\cdot  p$ perturbation around high symmetry points}

In this section, we discuss the nodal-point and nodal-line structures in magnetic materials whose symmetry group are either type-III or type-IV Shubnikov magnetic space groups. The symmetry operations which keep a momentum $\pmb k$ invariant (up to a reciprocal lattice vector) form a magnetic point group $G_0(\pmb k)$ which is called the little co-group. The degeneracy of the energy bands at $\pmb k$ is determined by the irreducible (projective) Reps of the little co-group. The dispersion around $\pmb k$ can be obtained using the $ k\cdot  p$ perturbation theory. 

Suppose that the little co-group $G_0(\pmb k)$ has a $d$-dimensional irreducible (projective) Rep, which is carried by the quasi-particle bases $\psi_{\pmb k}^\alpha,\alpha=1,2,...,d$,  with
\beq
\hat g \psi_{\pmb k}^\dag \hat g^{-1}&=&\psi_{\pmb k}^\dag M(g)K_{s(g)},\label{kdotp_transdag}\\
\hat g \psi_{\pmb k}\hat g^{-1}&=&K_{s(g)}M(g)^\dag \psi_{\pmb k},\label{kdotp_trans}
\eeq
for $g\in G$.  The degeneracy is generally lifted at the vicinity of $\pmb k$. When $\delta\pmb k$ is small enough,  it is expected that $\psi_{\pmb{k}+\delta \pmb k}^\dag $ and $\psi_{\pmb{k}+\delta \pmb k}$ vary in the way similar to (\ref{kdotp_transdag}) and (\ref{kdotp_trans}) under the group action,
\beq
\hat g \psi_{\pmb{k}+\delta \pmb k }^\dag \hat g^{-1}&=&\psi_{\pmb{k}+\hat g\delta \pmb k}^\dag M(g)K_{s(g)},\label{hdk}\\
\hat g \psi_{\pmb{k}+\delta \pmb k}\hat g^{-1}&=&K_{s(g)}M(g)^\dag \psi_{\pmb{k}+\hat g \delta \pmb k}.\label{Tdk}
\eeq
Suppose the Hamiltonian at $\pmb k +\delta \pmb k$ is given by
\begin{align}\label{kdotp_ham_deltak} 
H_{\pmb{k}+\delta \pmb k}=\psi_{\pmb{k}+\delta \pmb k}^\dag \Gamma (\delta\pmb k) \psi_{\pmb{k}+\delta \pmb k},
\end{align}
where $ \Gamma (\delta\pmb k)$ is an Hermitian matrix $ \Gamma^\dag (\delta\pmb k)= \Gamma (\delta\pmb k)$. 
When summing over all the momentum variation, the total Hamiltonian should preserve the $G$ symmetry, $i.e.$, 
\begin{align}\label{total_Ham}
\hat g \left(\sum_{\delta k}H_{\pmb{k}+\delta \pmb k}\right)\hat g^{-1} =\left(\sum_{\delta k}H_{\pmb{k}+\delta \pmb k}\right), 
\end{align}
for all $g\in G$. Substituting the equations (\ref{kdotp_ham_deltak}), (\ref{hdk}) and (\ref{Tdk}) into (\ref{total_Ham}), we obtain,
\beq\label{Gammak}
M(g)K_{s(g)} \Gamma(g^{-1}\delta \pmb k) K_{s(g)} M^\dag (g) = \Gamma(\delta\pmb  k).
\eeq
which is the most general symmetry requirement.

If the leading order of $\Gamma(\delta\pmb k)$ is linear in $\delta\pmb k$, namely,  $\Gamma(\delta\pmb k)\sim \delta \pmb k\cdot \pmb\Gamma$, then the dispersion around this high-degeneracy point froms a cone. For fermionic systems, a conic dispersion is called a Dirac cone\cite{KaneDirac,Tang,XuPRB,Armitage,HUA,Watan,Cano,XuNature,elcoro2020magnetic,Bouhon2020} if $d=4$ and if $\tilde T=\mathcal I T$  ($\mathcal I$ is the spacial inversion operation) is an element of $G_0(\pmb k)$ such that the energy bands are doubly degenerate away from $\pmb k$. On the other hand, if the degeneracy remains unchanged along a special line crossing the point $\pmb k$, then this line is called a nodal line\cite{NodalLine1,Burkov2011, NodalLine3, NodalLine2,Fang2015, Weng2015,NodalLine4, Gei2019, guo2020, cui2020,YFL}.

Following the idea of the previous sections, here we provide a criteria to judge whether the dispersion around the point $\pmb k$ is linear or of higher order, and whether the degeneracy is stable in a high symmetry line. 

\subsubsection{Nodal points with linear dispersion} \label{sec:disper_kpHam}

Firstly, we consider linear dispersion around $\pmb k$, namely, 
\beq\label{linear}
\Gamma(\delta\pmb k) =  \sum_{m=1}^3 \delta k_m  \Gamma^m + O(\delta k^2).
\eeq

Here $\delta \pmb k$ is a dual vector under the point group operations in $H$, namely,
\beq
&&\hat h\delta k_m = \sum_n D^{(\bar v)}_{mn}(h)\delta k_n,\label{hk} 
\eeq
where $(\bar v)$ is the dual Rep of the vector Rep  $(v)$ of the unitary subgroup $H$ with $D^{(\bar v)}(h)=\left([D^{(v)}(h)]^{-1}\right)^T$. The vector Rep is real, so $(\bar v)$ is equivalent to $(v)$ [in orthonormal bases, $(\bar v)$ is identical to $(v)$, but we do not require the bases $[\pmb b_1,\pmb b_2,\pmb b_3]$ in the reciprocal space to be orthonormal]. 

From (\ref{Gammak})$\sim$(\ref{hk}), it can be shown (see Appendix \ref{app:vec}) that $\pmb\Gamma$ carries the dual vector Rep of $H$, namely,
\begin{align}
&M(h)\Gamma^m M(h)^\dag=\sum_n D^{(\bar v)}_{nm}(h)\Gamma^n. \label{kdotp_Gampro_F1}
\end{align}
In the following we first assume that the vector Rep $(v)$ is irreducible. The case $(v)$ is reducible will be mentioned later.

According to the action of $T_0$ on $\delta \pmb k$, we first discuss a special case where $T_0$ acts trivially on $\delta \pmb k$, then go to the general cases.

\section*{The special case $T_0 \delta \pmb{k}=\delta \pmb{k}$}\label{kdotp_seca}
Firstly we consider the case that $T_0$ acts trivially on $\delta \pmb k$, 
\beq
&&T_0 \delta \pmb{k}=\delta \pmb{k}.\label{Tk}
\eeq
From (\ref{Gammak}), (\ref{linear}) and above equation, we have,
\begin{align}
M(T_0)K\Gamma^mKM(T_0)^\dag=\Gamma^m. \label{kdotp_Gampro_F2}
\end{align}
The requirements (\ref{kdotp_Gampro_F1}) and (\ref{kdotp_Gampro_F2}) are similar to (\ref{UH}) and (\ref{T0}), respectively. If there exists three $d\times d$ Hermitian matrices $\Gamma^{1,2,3}$ satisfying these requirements, then the dispersion around $\pmb k$ forms a cone. From the discussion in \ref{sec:anti},  we can judge the existence of $\Gamma^{1,2,3}$  by checking if the projected space $M(h)\otimes M^*(h)P_{HT_0}$ (or equivalently the projected space $M(h)\otimes F(h)P_{\eta_{0}}$) contains the dual vector Rep $(\bar v)$ of $H$.

When the vector Rep $(v)$ of $H$ is irreducible, then the existence of linear dispersion can be checked by calculating the following quantity,
{\footnotesize
\begin{empheq}[box=\fbox]{align} \label{kdotp_judge_Final}
a_{(\bar v)}^H=\!{1\over2 |H|}\! \sum_h \!\left[ \chi(h)\chi^*(h) \! +\!\omega(hT_0, hT_0) \chi((hT_0)^2)\right] \chi^{(v)}(h),
\end{empheq}
}where $[\chi^{(\bar v)}(h)]^* =\chi^{(\bar v)}(h)= \chi^{(v)}(h)$ has been used. If $a_{(\bar v)}^H$ is a nonzero integer, then the dispersion is linear along all directions. 

The existence of $\Gamma^{1,2,3}$ under the conditions (\ref{kdotp_Gampro_F1}) and (\ref{kdotp_Gampro_F2}) can also be checked straightforwardly by reducing the product Rep $M(g)\otimes M^*(g)K_{s(g)}, g\in G$ into direct sum of IPReps using the method introduced in section \ref{sec:redau}. If the resultant IPReps contain the dual vector Rep(s) whose bases are hermitian when reshaped into matrix forms, then the leading order dispersion around $\pmb k$ is linear. Therefore, we need a projection operator $P_H$ to project the bases carrying the dual vector Reps onto the hermitian subspace (i.e. the union of real symmetric subspace and the imaginary anti-symmetric subspace). Performing the projection $P_H$ and reshaping the remaining bases into hermitian matrices, then we obtain the explicit form of $\Gamma^{1,2,3}$,
\Beq
\Gamma^m = \sum_{i=1}^p r_i \gamma^m_i,
\Eeq
where $p$ is the mutiplicity of the dual vector Rep(s) contained in the product Rep, $r_i\in\mathbb R$ are arbitrary real numbers, and $\gamma_i^{1,2,3}, i=1, ... ,p$ are the bases of the $i$th dual vector Rep. Substituting into (\ref{linear}) and (\ref{kdotp_ham_deltak}) we obtain the $k\cdot p$ effective model. 

In next section, we will introduce an alternative method to obtain the hermitian matrices $\Gamma^{1,2,3}$ without using the projection operator $P_H$.

\section*{The general case }\label{kdotp:General cases}

Generally, $T_0$ acts on $\delta\pmb k$ in the following way,
\Beq
T_0 \delta k_m=\sum_n D^{(\bar v)}_{mn}(T_0) \delta k_n,
\Eeq
where $D^{(\bar v)}(T_0)$ is a $3\times 3$ real matrix, and $D^{(\bar v)}(T_0)K$ can be considered as part of the dual vector Rep of the anti-unitary group $G_0(\pmb k)$. 
Accordingly, $\Gamma^m$ should vary in the following way in anology to (\ref{kdotp_Gampro_F2}),
\begin{align}\label{kdotp:GeneralGamma}
M(T_0) (\Gamma^{m})^* M^\dag (T_0)= \sum_n \Gamma^n D^{(\bar v)}_{nm}(T_0).
\end{align}

Similar to the discussion in Sec.\ref{TypeI}, we introduce a bases transformation $\tilde \psi_{\pmb k}=M^*(T_0)\psi_{\pmb k}$, then
\Beq
\hat h\tilde \psi_{\pmb k}\hat h^{-1} = F^T(h)\tilde \psi_{\pmb k},\ \  \hat T_0 \tilde \psi_{\pmb k}\hat T_0^{-1} = M^T(T_0) K\tilde \psi_{\pmb k},
\Eeq
where $F(h)$ is defined in (\ref{Fh}). We further define 
\beq\label{Gammaprime}
\tilde {\Gamma}^m =  \Gamma^m  M^T(T_0), 
\eeq
then (\ref{kdotp_Gampro_F1}) and (\ref{kdotp:GeneralGamma}) deform into 
\beq\label{tildeGaH}
M(h)\tilde{\Gamma}^{m}F^T(h) &=& \sum_n D^{(\bar v)}_{nm}(h) \tilde{\Gamma}^{n},\label{tildeGaH}\\
M(T_0) \big(\tilde\Gamma^m\big)^*M^T(T_0)&=&\sum_nD^{(\bar v)}_{nm}(T_0)\tilde\Gamma^n,\label{tildeGaT} 
\eeq
respectively.

Considering the set of matrices $\tilde\Gamma^{1,2,3}$ as a single column vector $\tilde \Gamma$ with 
\beq\label{Gamma3dd}
(\tilde \Gamma)_{n\times d^2+i\times d+j} = (\tilde\Gamma^n)_{ij},
\eeq 
then $\tilde\Gamma$ carries the identity Rep of $G$,
\beq\label{WGm}
W(h)\tilde \Gamma=\tilde \Gamma,\ \  W(T_0)K \tilde \Gamma=\tilde \Gamma K,
\eeq
where 
\Beq
&&\!\! W(h)=D^{(v)}(h)\otimes V(h)=D^{(v)}(h)\otimes M(h)\otimes F(h),\\
&&\!\!W(T_0)\!=\!D^{(v)}(T_0)\! \otimes\! V(T_0)\!=\!D^{(v)}(T_0)\!\!\otimes\!\! M(T_0)\!\otimes\! M(T_0).
\Eeq 
Here the relation $\sum_{m} D^{(\bar v)}_{mi}(h)D^{(v)}_{mj}(h) =\delta_{ij}$ has been used. Later we will alternately use the notation $\tilde\Gamma^m$ and $\tilde\Gamma$.

Taking transpose of (\ref{Gammaprime}), we have
\beq\label{kdotp_tildeGamma}
(\tilde {\Gamma}^m)^T&=&M(T_0) (\Gamma^m)^T \notag \\
&=& M(T_0)\sum_n D^{(v)}_{mn}(T_0) M^*(T_0) \Gamma^n M^T(T_0) \notag\\
&=&\eta_0 M(\sigma) \sum_n  D^{( v)}_{mn} (T_0) \tilde \Gamma^n,
\eeq
where we have used the transpose of (\ref{kdotp:GeneralGamma}) namely $\sum_n (\Gamma^n)^T D^{(\bar v)}_{nm}(T_0) = M^*(T_0) \Gamma^m M^T(T_0)$ and the hermitian condition $\Gamma^m=(\Gamma^m)^\dag$. Namely, the symmetry condition in (\ref{kdotp_tildeGamma}) is a consequence of the requirements that $\tilde {\Gamma}^m$  should be hermitian and $T_0$-symmetric. 

Taking the transpose of $\tilde \Gamma$ twice, we obtain $\big((\tilde {\Gamma}^m)^T\big)^T=\eta_0^2\sum_l D_{ml}^{(v)}(\sigma) M(\sigma)\tilde \Gamma^lM^T(\sigma)=\sum_l D_{ml}^{(v)}(\sigma) M(\sigma)\tilde \Gamma^lF^T(\sigma)$. Therefore, the self-consistency requires that $\sum_l D_{ml}^{(v)}(\sigma) M(\sigma)\tilde \Gamma^lF^T(\sigma)=\tilde {\Gamma}^m$, which is ensured by (\ref{tildeGaH}).

Noticing $\tilde\Gamma^m=[\eta_0 M(\sigma) \sum_n  D^{( v)}_{mn} (T_0) \tilde \Gamma^n]^T$, we can define the generalized twist operator which transforms $\tilde\Gamma^m$ to $[\eta_0M(\sigma) \sum_n  D^{( v)}_{mn} (T_0) \tilde \Gamma^n]^T$,
\Beq
\mathbb T_{\eta_0} &=& \eta_0 (I\otimes \mathscr T)\Big(D^{( v)}(T_0)\otimes M(\sigma)\otimes I\Big)\\
&=& D^{( v)}(T_0)\otimes \mathscr T_{\eta_0}, 
\Eeq
where $\mathscr T_{\eta_0}$ is defined in (\ref{scrTeta0}). Thus $\mathbb T_{\eta_0}$ defines a generalized `transpose' of $\tilde\Gamma$, and the self-consistency condition is $(\mathbb T_{\eta_0})^2 \tilde \Gamma =W(\sigma)\tilde\Gamma=\tilde\Gamma$. In the eigenspace of $W(\sigma)$ with eigenvalue 1, the operator $\mathbb T_{\eta_0}$ has eigenvalues $\pm1$. If we call a vector satisfying 
\beq\label{Teta0}
\mathbb T_{\eta_0} \tilde\Gamma =\tilde\Gamma
\eeq
to be $\eta_0$-symmetric, then the projector onto the $\eta_0$-symmetric subspace is
\[
\mathbb P_{\eta_0} = {1\over2} \mathbb P_\sigma(I+  \mathbb T_{\eta_0}),
\]
where $\mathbb P_\sigma$ is the projector onto the eigenspace of $W(\sigma)$ with eigenvalue 1. 

Denoting $\mathbb P^{(I)}$ as the projector onto the subspace of identity Reps $$\mathbb P^{(I)}={1\over|H|}\sum_{h\in H}W(h),$$ then obviously $\mathbb P^{(I)}\mathbb P_\sigma=\mathbb P^{(I)}$. The condition for the linear dispersion is that the $\eta_0$-symmetric subspace contains the identity Rep of $H$, namely
\[
a_{(I)}^H = {\rm Tr}\big( \mathbb P^{(I)} \mathbb P_{\eta_0} \big)  = {1\over2} {\rm Tr}\big( \mathbb P^{(I)}  (I+  \mathbb T_{\eta_0})\big)\geq1.
\]

Defining the matrices  
\[
W_{\eta_0}(h)= {1\over2}W(h)  (I+  \mathbb T_{\eta_0}),
\]
with the matrix entries
\Beq
[W_{\eta_0}(h)]_{mkl,nij}\!\! &=&\!\! {1\over2}\Big[ D_{mn}^{(v)}(h) M_{ki}(h) F_{lj}(h) \\
&+&\!\! \eta_0 \sum_a D_{mn}^{(v)}(hT_0) M_{kj}(h) F_{la}(h)M_{ai}(\sigma) \Big],
\Eeq
then the condition of linear dispersion reduces to $a_{(I)}^H = {\rm Tr\ } (\mathbb P^{(I)}\mathbb P_{\eta_0}) ={1\over 2|H|}\sum_{h\in H} {\rm Tr\ }W_{\eta_0}(h)\geq1$. From the above expression of $W_{\eta_0}(h)$, after some calculations (see Appendix \ref{app:proof}) we obtain 
{\footnotesize
\begin{empheq}[box=\fbox]{align}\label{general_form_final}
a_{(I)}^H\!\!=\!\!{1\over 2|H|}\! \sum_h \!\!\Bigg[ \! |\chi(h)|^2\chi^{(v)}\!(h)
\!+\! {\chi^{(v)}\!(hT_0)} \omega(hT_0, hT_0) \chi((hT_0)^2)\!\Bigg]\!.
\end{empheq}
}
Specially, if $D^{(v)}(T_0)=I$, then ${\chi^{(v)}(hT_0)}=\chi^{(v)}(h)$, above formula reduces to the equation (\ref{kdotp_judge_Final});  if $D^{(v)}(T_0)=-I$, then ${\chi^{(v)}(hT_0)} = -\chi^{(v)}(h)$, above formula can be simplified as
\Beq
a_{(\bar v)}^H\!=\!{1\over2 |H|}\! \!\sum_h \!\left[ \chi(h)\chi^*(h) \! -\!\omega(hT_0, hT_0) \chi((hT_0)^2)\right]\!\chi^{(v)}\! (h).
\Eeq

When the vector Rep $(v)$ of $G$ is reducible, then the dispersions may be different along different directions.  In this case, we need to reduce the vector Rep $(v)$ and check the resultant irreducible Reps one by one. For instance,  if $H=D_{4h}$, then the vector Rep is reduced to $(v)=E_{u}+A_{1u}$, where $(k_x, k_y)^T$ vary in the rule of the Rep $E_u$  and $(k_z)$ vary in the rule of $A_{1u}$. In this case, we need to replace $(v)$ in (\ref{kdotp_judge_Final}) or (\ref{general_form_final}) by $(E_u)$ and $(A_{1u})$. If $a_{(E_u)}^H$ is nonzero, then the dispersion along $k_x, k_y$ is linear, otherwise the dispersion is quadratic or of higher order. Similarly, if $a_{(A_{1u})}^H\neq 0$, then the dispersion along $k_z$ is linear.

\subsubsection{Procedure of obtaining the $\Gamma^{1,2,3}$ matrices}

As $a_I^H\geq1$, above procedure provides another way to obtain the matrices $\Gamma^{1,2,3}$ besides the method of reducing the product Rep $M(g)\otimes M^*(g)K_{s(g)}, g\in G$ into direct sum of IPReps.  We only consider the case where the 3-dimensional vector Rep $(v)$ is irreducible. The procedure is easily generalized to the cases in which $(v)$ is reducible.

Firstly, obtain the eigenspace of $\mathbb P^{(I)}\mathbb P_{\eta_0}={1\over2}\big( \mathbb P^{(I)}  (I+  \mathbb T_{\eta_0})\big)$ with eigenvalue 1. Supposing the dimension of this eigenspace is $p\geq1$,  choose a set of orthonormal bases $ \tilde\zeta_1, \tilde\zeta_2, ..., \tilde\zeta_p$. 

Secondly,  tune the bases in above subspace such that $T_0$ is represented as $IK$. To this end, calculate the Rep of $T_0$
\[
\mathscr M_{ij}(T_0) K = \tilde \zeta_i^\dag W(T_0) K \tilde \zeta_j,
\]
and construct the new bases
\[
\tilde \Delta_j=\sum_i \tilde \zeta_i \mathscr M^{1\over2}_{ij}(T_0).
\]
Then each of these new bases carries the identity Reps of the total group $G$ (see Appendix~\ref{app:condition} for details).

Thirdly, from (\ref{Gamma3dd}), we can decouple each eigenvector $\tilde \Delta_i$ as
$$
\tilde\Delta_i ={1\over\sqrt{3}}e_1\otimes \tilde \gamma_i^1+{1\over\sqrt{3}}e_2\otimes \tilde \gamma_i^2+{1\over\sqrt{3}}e_3\otimes \tilde \gamma_i^3, 
$$  
where $e_1=(1,0,0)^T, e_2=(0,1,0)^T,e_3=(0,0,1)^T$ stand for the bases  $\pmb a_1,\pmb a_2,\pmb a_3$ (not necessarily orthogonal) of the vector Rep $(v)$ respectively, and $\tilde \gamma_i^m$ is the Schmidt partner of $e_m$. It can be shown that (see Appendix~\ref{app:condition}), $\tilde \gamma_i^m$ satisfies the relations (\ref{tildeGaH}) and (\ref{tildeGaT}), and that $\gamma_i^m=\tilde \gamma_i^mM^*(T_0)$ is a hermitian matrices for any $i=1,...p$ and $m=1,2,3$.

The general form of the vector $\tilde \Gamma$ is a linear combination of the bases $\tilde \Delta_i$,
\beq\label{Schmidt}
\tilde \Gamma &=&r_1 \tilde \Delta_1 + r_2 \tilde\Delta_2 +...+r_p \tilde\Delta_p \notag\\
&=&{1\over\sqrt{3}} e_1 \otimes\tilde\Gamma^1+{1\over\sqrt{3}} e_2 \otimes\tilde\Gamma^2 +{1\over\sqrt{3}} e_3 \otimes\tilde\Gamma^3,
\eeq
where $r_1,...,r_p\in \mathbb R$ are non-universal real constants.
From (\ref{Schmidt}) and the relation $\Gamma^m =   \tilde \Gamma^m M^*(T_0)$, we obtain the matrices $\Gamma^m$,
\Beq
\Gamma^m =    \sum_{i=1}^p r_i \Big(\tilde\gamma_i^m M^*(T_0)\Big) =\sum_{i=1}^p r_i \gamma_i^m.
\Eeq

\subsubsection{Higher Order Dispersions and Nodal Lines}

The discussion of linear dispersion can be straightforwardly generalized to higher order dispersions. Suppose that a set of order-$N$ homogeneous polynomials $$
P^{(N)}_i(\delta \pmb k)=\sum_{a+b+c=N} f^{(N)}_{i(abc)}\delta k_1^a\delta k_2^b\delta k_3^c
$$ 
carry a linear Rep $(\bar \mu)$ of the group $G$, the existence of the dispersion
\[
H_{\pmb k+ \delta\pmb k} = \sum_{i}  P^{(N)}_{i} (\delta \pmb k) \psi_{\pmb k+\delta\pmb k}^\dag\Gamma^{(N)}_i \psi_{\pmb k+\delta\pmb k}
\]
can be judged using the formula (\ref{general_form_final}) with the vector Rep $(v)$ replaced by the linear Rep $(\mu)$ (see Appendix \ref{app:vec} for an example). The method of obtaining the corresponding matrices $\Gamma^{(N)}_i$ is also similar. 

If the vector Rep of $G$ is reducible, it is possible that the degeneracy is lifted along some directions (such as the $k_x, k_y$ directions) but are preserved along certain direction (such as the $k_z$ direction) to form a nodal line. The little co-group on the line is generally smaller than the one on the conner of the BZ. If the IPRep of the little co-group at the conner of the BZ is still irreducible along a certain line, then this line is a nodal line. Therefore, the existence of the nodal line can be judge from the formula (\ref{IR_condition}) \cite{YFL}. The same method can be applied to judge the stability of the degeneracy under external perturbations (see section \ref{sec:response}).

\subsection{Response to External Probe Fields}\label{sec:response}

The IPRep $M(g)K_{s(g)}, g\in G$ of anti-unitary symmetry group $G$ results in energy degeneracy in single-particle spectrum. Here we discuss the possible lifting of the degeneracy under external probe fields, such as $\pmb E$ and $\pmb B$, stain, or temperature gradience, etc. We assume that the probe fields carry irreducible linear Reps of the group $G$. For instance, electric fields $\pmb E$ or magnetic fields $\pmb B$ carry vector Reps of the unitary subgroup $H$, but they vary differently under the anti-unitary element $T_0$ since $\pmb E$ is invariant under time reversal while $\pmb B$ reverse its sign under time reversal.

There are two possible consequences under external probes. The first possible result is that the degeneracy guaranteed by the IPRep $M(g)K_{s(g)}$ is preserved. The other possibility is that the degeneracy is lifted in linear or higher order terms of the probe fields.

To judge if the probe fields can lift the degeneracy or not (summing over all orders of perturbation), we need to know the remaining symmetry group with the presence of the perturbation, and then judge if the restrict Rep is reducible or not.  Suppose the probe field reduces the symmetry group from $G=H+T_0H$ to $G'=H'+T'_0H'$ where $T_0'$ is anti-unitary. If the irreducible Rep $M(g)K_{s(g)}$ of $G$ remains irreducible for $G'$, namely if
\begin{align}
	{1\over |H'|} \sum_{h\in H'} {1\over2}\left[ \chi(h)\chi^*(h)  +\omega_2(T'_0h, T'_0h) \chi((T'_0h)^2)\right] = 1. \notag
\end{align}
holds for the group $G'$, then the degeneracy is robust against this perturbation.

If the left hand side of above equation is not equal to 1, then the restricted Rep is reducible and the degeneracy can be lifted at certain order. In the following we only discuss the linear splitting by external fields, such as $\pmb E$ and $\pmb B$. The linear response is given by the perturbed Hamiltonian in form of
\begin{align}
	H=\sum_{\pmb k} \psi^\dag_{\pmb k}(\pmb E\cdot \pmb P +  \pmb B\cdot \pmb M)\psi_{\pmb k}
\end{align}
where $P^m, M^m$ are CG matrices similar to the $\Gamma^m$ matrices discussed before. The existence of linear coupling terms $\pmb P$ (or $\pmb M$) can be checked using the criterion  (\ref{general_form_final}) with $M^{(v)}(T_0)$ the transformation matrix of $\pmb E$ (or $\pmb B$) under the action of $T_0$.

\section{Conclusions and Discussions}\label{sec:sum} 

In summary, from a physical approach, we derived the condition (\ref{IR_condition}) for the irreducible projective representations of anti-unitary groups. This approach provides a practical method to reduce an arbitrary projective Rep into a direct sum of irreducible ones, which is applicable for either unitary or anti-unitary groups. 

As a physical application of this approach, for single particle systems with magnetic space group symmetry, we provide the method to construct the $ k\cdot p$ perturbation theory at the high symmetry point of the Brillouin zone. We provide the criterion (\ref{general_form_final}) to judge if the dispersion is linear or of higher order, and then provide the method to obtain the corresponding $ k\cdot p$ Hamiltonian up to a few non-universal constants. 

In the present work, we assume that the quasiparticles vary under linear representations of the magnetic space groups. However, in strongly interacting systems, projective representations of the magnetic space groups can emerge in the fractionalized low-energy quasiparticle excitations for systems with intrinsic topological order. We leave the discussion of this situation for future study. 

{\it Acknowledgements} We thank L. J. Zou and Y. X. Zhao for helpful discussions. Z.Y.Y and Z.X.L. are supported by the Ministry of Science and Technology of China (Grant No. 2016YFA0300504), the NSF of China (Grants No.11574392 and No. 11974421), and the Fundamental Research Funds for the Central Universities and the Research Funds of Renmin University of China (Grant No. 19XNLG11). J. Yang and C. Fang are supported by Ministry of Science and Technology of China under grant number 2016YFA0302400, National Science Foundation of China under grant number 11674370 and Chinese Academy of Sciences under grant number XXH13506-202 and XDB33000000.


\appendix

\section{ Hermiticity of $\Gamma$ and validity of (\ref{Irrep-I}) for general anti-unitary groups}\label{app:Peta0} 

In this appendix, we firstly prove a lemma, then introduce a theorem. Type-I and type-II anti-unitary groups are treated on equal footing.\\
\begin{lemma}\label{lemma2}
If the unitary representation $M(T_0)K$ of an anti-unitary element $T_0$ satisfies the condition $ [M(T_0)K]^2=M(T_0)M^*(T_0) =I$, then there exist a unitary matrix $U$ such that $U^\dag M(T_0)KU = U^\dag M(T_0)U^*K =IK$.
\end{lemma}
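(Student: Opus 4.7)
The plan is to recast the hypothesis as a statement about a single matrix and then invoke the Autonne–Takagi type factorization. First I would note that unitarity $M^{\dagger}(T_0)M(T_0)=I$ together with the given $M(T_0)M^{*}(T_0)=I$ implies $M(T_0)=M^{T}(T_0)$, so $M(T_0)$ is a \emph{unitary symmetric} matrix. The desired conclusion $U^{\dagger}M(T_0)U^{*}=I$ is equivalent to $M(T_0)=UU^{T}$, so the entire lemma reduces to showing that every unitary symmetric matrix admits such a factorization.

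Next I would produce $U$ explicitly by diagonalizing the real and imaginary parts simultaneously. Write $M(T_0)=A+iB$ with $A=A^{T}$, $B=B^{T}$ real. Expanding $M(T_0)M^{*}(T_0)=I$ yields
\begin{equation*}
A^{2}+B^{2}=I,\qquad AB-BA=0,
\end{equation*}
so $A$ and $B$ are commuting real symmetric matrices and can therefore be simultaneously diagonalized by a real orthogonal matrix $O$. Setting $D:=O^{T}M(T_0)O=O^{T}AO+iO^{T}BO$, I obtain a diagonal matrix whose diagonal entries $d_{k}=\alpha_{k}+i\beta_{k}$ satisfy $\alpha_{k}^{2}+\beta_{k}^{2}=1$, so $d_{k}=e^{i\theta_{k}}$ for some real $\theta_{k}$.

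Now I would construct the square root: let $V=\mathrm{diag}(e^{i\theta_{k}/2})$, so that $V^{2}=D$ and $V^{T}=V$, and set $U:=OV$. Since $O$ is real orthogonal and $V$ is diagonal unitary, $U$ is unitary. Using $O^{T}=O^{-1}$ I compute
\begin{equation*}
UU^{T}=OV\,V^{T}O^{T}=OV^{2}O^{T}=ODO^{T}=M(T_0),
\end{equation*}
which proves the factorization. Consequently
\begin{equation*}
U^{\dagger}M(T_0)U^{*}=U^{\dagger}(UU^{T})U^{*}=U^{T}U^{*}=(U^{\dagger}U)^{*}=I,
\end{equation*}
and translating back through $KU=U^{*}K$ gives $U^{\dagger}M(T_0)KU=IK$, as claimed.

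I do not foresee a serious obstacle; the content is essentially the Autonne–Takagi decomposition specialized to the unitary case. The only place requiring a little care is verifying that the entrywise square root $V$ of $D$ is well-defined and unitary (which follows because every diagonal entry of $D$ lies on the unit circle, so the principal branch $e^{i\theta_{k}/2}$ gives an unambiguous unitary choice), and confirming that $V^{T}=V$ so that the identity $UU^{T}=OV^{2}O^{T}$ goes through cleanly. If one wished to extend the argument beyond the hypothesis $[M(T_0)K]^{2}=I$ (for example to $[M(T_0)K]^{2}=M(\sigma)$ with nontrivial $\sigma$), a genuine obstruction would appear because $M(T_0)$ would no longer be symmetric; but that case is outside the scope of this lemma.
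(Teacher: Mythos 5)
Your proof is correct and follows essentially the same route as the paper: both reduce the claim to extracting a suitable square root of the symmetric unitary matrix $M(T_0)$ (the paper simply sets $U=[M(T_0)]^{1/2}$ and asserts $U^{\dagger}=U^{*}=[M(T_0)]^{-1/2}$). Your simultaneous-diagonalization construction of $U=OV$ with $UU^{T}=M(T_0)$ supplies exactly the existence-of-the-right-branch argument that the paper's one-line proof glosses over, so if anything your version is the more complete one.
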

\begin{proof}
Since $[M(T_0)]^\dag = [M(T_0)]^{-1} =[M(T_0)]^*$, letting $U = [M(T_0)]^{1\over2}$, we have $U^\dag = U^* = [M(T_0)]^{-{1\over2}}$. Consequently, one can easily verify that
\Beq
U^\dag M(T_0)K U &= &U^\dag M(T_0)U^* K\\
&=&[M(T_0)]^{-{1\over2}}  M(T_0)  [M(T_0)]^{-{1\over2}} K\\
&=&IK.
\Eeq
\end{proof}

\begin{corollary}
In a linear Rep of anti-unitary group $G$, one can chose bases in the eigenspace of $\sigma \equiv T_0^2$ with eigenvalue 1 such that $T_0$ is represented as $IK$ in this subspace.
\end{corollary}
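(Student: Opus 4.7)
The plan is to reduce the corollary to Lemma~\ref{lemma2} by restricting the action of $T_0$ to the $+1$ eigenspace $V_+$ of $M(\sigma)$ and showing that this restriction is precisely the situation covered by the lemma.

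First, I would verify that $V_+$ is stable under the anti-unitary operator $M(T_0)K$. Since $T_0\sigma = T_0 T_0^2 = T_0^2 T_0 = \sigma T_0$, conjugating $M(\sigma)$ by $M(T_0)K$ in a linear Rep yields $M(T_0)M^*(\sigma)M^\dag(T_0) = M(\sigma)$, equivalently $M(T_0)M^*(\sigma) = M(\sigma)M(T_0)$. For any $v\in V_+$, this gives
\begin{align*}
M(\sigma)\big[M(T_0)K\,v\big] &= M(T_0)M^*(\sigma)K\,v \\
&= M(T_0)K\,M(\sigma)v = M(T_0)K\,v,
\end{align*}
so $M(T_0)K\,v\in V_+$, as needed.

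Next, on $V_+$ the operator $\tilde U := (M(T_0)K)\big|_{V_+}$ is anti-unitary, and its square is $[M(T_0)K]^2|_{V_+} = M(\sigma)|_{V_+} = I$, so $\tilde U$ is an anti-unitary involution. Picking any orthonormal basis $\{e_i\}$ of $V_+$, I would define the matrix $\tilde M(T_0)$ by $\tilde U(e_j) = \sum_i [\tilde M(T_0)]_{ij}\,e_i$, so that $\tilde U$ is represented in this basis as $\tilde M(T_0)K$. The involution property $\tilde U^2 = I$ then translates into the matrix identity $\tilde M(T_0)\,\tilde M^*(T_0) = I$, which is exactly the hypothesis of Lemma~\ref{lemma2}.

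Finally, I would invoke Lemma~\ref{lemma2} applied to $\tilde M(T_0)$ to obtain a unitary $U$ on $V_+$ satisfying $U^\dag \tilde M(T_0)\,U^* = I$. Passing to the new basis $f_k = \sum_j U_{jk}e_j$, a short bookkeeping computation shows that the matrix of $\tilde U$ in $\{f_k\}$ becomes $U^\dag \tilde M(T_0)\,U^* = I$, so $T_0$ is represented by $IK$ on $V_+$ in this basis. The only subtle step is the stability of $V_+$ under the anti-unitary action in step one; once that is in hand, the rest is an application of Lemma~\ref{lemma2} together with a careful bookkeeping of the change of basis under an anti-linear operator (where the transformation rule picks up a complex conjugate, giving $U^\dag(\cdot)U^*$ rather than $U^\dag(\cdot)U$).
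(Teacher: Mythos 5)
Your proof is correct and follows essentially the same route as the paper: show the $+1$ eigenspace of $M(\sigma)$ is closed under the anti-unitary action of $T_0$ via $\sigma T_0 = T_0\sigma$, observe that the restricted representation matrix $\tilde M(T_0)$ satisfies $\tilde M(T_0)\tilde M^*(T_0)=I$, and then invoke Lemma~\ref{lemma2} with the anti-linear change-of-basis rule $U^\dag(\cdot)U^*$. The paper's version simply writes the new basis explicitly as $\Delta'_i=\sum_j\Delta_j\mathscr M^{1/2}_{ji}(T_0)$, which is the concrete choice of $U$ from the lemma's proof.
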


\begin{proof}
Suppose $D(g)K_{s(g)}, g\in G$ is a $N$-dimensional linear Rep of $G$, $\Delta_1,...\Delta_N$ are the orthonormal bases of the eigenspace of $\sigma=T_0^2$ with eigenvalue 1, namely.
\[
D(\sigma)\Delta_i =\Delta_i.
\]
Since $\sigma T_0=T_0\sigma$, and accordingly $D(\sigma)D(T_0)K=D(T_0)KD(\sigma)$, we have
\Beq
D(\sigma)D(T_0)K\Delta_i=D(T_0)K \big(D(\sigma)\Delta_i\big) = D(T_0)K \Delta_i,
\Eeq
namely, the eigenspace of $\sigma$ is closed under the action of $T_0$. In the eigenspace of $\sigma$ with eigen value 1, the Rep of $T_0$ takes the form
\Beq
\mathscr M(T_0)_{ij}K = \tilde \Delta_i^\dag D(T_0)K \tilde \Delta_j,
\Eeq
with $\mathscr M^\dag(T_0)=\mathscr M^{-1}(T_0)$ and $\mathscr M(T_0)\mathscr M^*(T_0)=\mathscr M(\sigma)=I$. From lemma \ref{lemma2}, the Rep of $T_0$ can be transformed into $IK$ in the new bases,
\[
\Delta'_i = \sum_{j=1}^N \Delta_j \mathscr M_{ji}^{1\over2}(T_0),
\]
with 
\[
\mathscr M(T_0) K \Delta'_i = \mathscr M(T_0) \Delta^{'*}_i K= \Delta^{'*}_i K.
\]
\end{proof}

Now it is ready to introduce the theorem.\\
\begin{theorem}\label{Theorem1}
If $\tilde \Gamma$ is a common eigenvector $\tilde \Gamma$ of $V_{\eta_0}(h), h\in H$ with eigenvalue 1, namely, $V_{\eta_0}(h)\tilde\Gamma=\tilde\Gamma$ for all $h\in H$, with $V_{\eta_0}(h)=\frac{1}{2} [V(h) (I + \mathscr T_{\eta_0})]$ and $V(h)=M(h)\otimes F(h)$, then $\tilde\Gamma$ has the following properties: \\
1) it carries the identity Rep of $H$; \\
2) it is $\eta_0$-symmetric; \\
3) if the basis satisfies $V(T_0)\tilde\Gamma^*=\tilde\Gamma$ ({\it i.e.} if $\tilde\Gamma$ carries the identity Rep of $G$), then $\Gamma=\tilde\Gamma M^*(T_0)$ is an hermitian matrix where $\tilde\Gamma$ has been reshaped into a matrix.  
\end{theorem}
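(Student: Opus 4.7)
The plan is to address the three claims in order. Parts 1 and 2 will both follow almost trivially from specializing the hypothesis to $h = E$; part 3 is a matrix computation that relies on the $\eta_0$-symmetry, the $T_0$-invariance, and the projective-rep identity $M(T_0)M^*(T_0) = \eta_0 M(\sigma)$.

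First I would set $h = E$ in the hypothesis. Because $M(E) = I$ (normalized cocycle), $V(E) = M(E) \otimes F(E) = I$, so $V_{\eta_0}(E)\tilde\Gamma = \tilde\Gamma$ collapses to $\tfrac{1}{2}(I + \mathscr T_{\eta_0})\tilde\Gamma = \tilde\Gamma$, i.e.\ $\mathscr T_{\eta_0}\tilde\Gamma = \tilde\Gamma$. This is the $\eta_0$-symmetry of part 2. Feeding it back into $V_{\eta_0}(h)\tilde\Gamma = \tilde\Gamma$ for generic $h \in H$ replaces $(I + \mathscr T_{\eta_0})$ by $2I$ on $\tilde\Gamma$, leaving $V(h)\tilde\Gamma = \tilde\Gamma$; and since the factor systems attached to $M(h)$ and $F(h) = M(T_0)M^*(h)M^\dagger(T_0)$ are mutually complex-conjugate and thus cancel on the tensor product, $V$ is a genuine linear Rep of $H$. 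Hence $\tilde\Gamma$ spans a one-dimensional invariant subspace of $V$, proving part 1.

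For part 3, I would reshape $\tilde\Gamma$ as a matrix and translate the two hypotheses: the $\eta_0$-symmetry becomes $\tilde\Gamma^T = \eta_0 M(\sigma)\tilde\Gamma$, and $V(T_0)\tilde\Gamma^* = \tilde\Gamma$ becomes $M(T_0)\tilde\Gamma^* M^T(T_0) = \tilde\Gamma$, which inverts via unitarity of $M(T_0)$ to $\tilde\Gamma^* = M^\dagger(T_0)\tilde\Gamma M^*(T_0)$. Then $\Gamma^\dagger = M^T(T_0)\tilde\Gamma^\dagger$, and substituting $\tilde\Gamma^\dagger = (\tilde\Gamma^T)^* = \eta_0^* M^*(\sigma)\tilde\Gamma^* = \eta_0^* M^*(\sigma)M^\dagger(T_0)\tilde\Gamma M^*(T_0)$ yields $\Gamma^\dagger = \eta_0^*\bigl(M^T(T_0)M^*(\sigma)M^\dagger(T_0)\bigr)\tilde\Gamma M^*(T_0)$. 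The key identity is for the bracketed factor: complex-conjugating $M(T_0)M^*(T_0) = \eta_0 M(\sigma)$ gives $M^*(\sigma) = \eta_0 M^*(T_0)M(T_0)$ (using $|\eta_0|=1$), after which the two pairs $M^T(T_0)M^*(T_0) = I$ and $M(T_0)M^\dagger(T_0) = I$ telescope the entire bracket to $\eta_0 I$. Combined with the overall $\eta_0^*$, this leaves the scalar $|\eta_0|^2 = 1$, so $\Gamma^\dagger = \tilde\Gamma M^*(T_0) = \Gamma$.

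The main obstacle will be the conjugation bookkeeping in step 3: tracking which of transpose, complex conjugate, and Hermitian conjugate is acting on $M(T_0)$ at each stage, and following the phase $\eta_0$ (a complex root of unity in type-II) through the chain of substitutions. The cleanest route is to handle the type-I case ($\sigma = E$, $M(\sigma) = I$, $\eta_0 = \pm 1$) first, where the bracketed factor collapses directly via $M^T(T_0)M^\dagger(T_0) = \eta_0 I$, and then observe that the extra $M(\sigma)$ factors appearing in type-II are precisely consumed by the projective relation $M(T_0)M^*(T_0) = \eta_0 M(\sigma)$, so that the type-II algebra is parallel rather than genuinely new.
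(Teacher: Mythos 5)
Your proof is correct, and for parts 1) and 2) it takes a genuinely more elementary route than the paper. The paper establishes the $\eta_0$-symmetry by averaging the hypothesis over $H$ to form $P^{(I)}P_{\eta_0}$ and then disentangling the two projectors --- a step that implicitly relies on $\mathscr T_{\eta_0}$ preserving the image of $P^{(I)}$ (equivalently, a norm argument for the orthogonal projector $P^{(I)}$) in order to pass from $P^{(I)}\mathscr T_{\eta_0}\tilde\Gamma=\tilde\Gamma$ to $\mathscr T_{\eta_0}\tilde\Gamma=\tilde\Gamma$. Your specialization to $h=E$ yields $\mathscr T_{\eta_0}\tilde\Gamma=\tilde\Gamma$ in one line (it does assume the standard gauge $M(E)=I$ so that $V(E)=I$, which the paper also uses implicitly), and feeding this back to get $V(h)\tilde\Gamma=\tilde\Gamma$ for all $h$ is immediate; your remark that the factor systems of $M(h)$ and $F(h)$ are mutually conjugate, so that $V$ is a genuine linear Rep of $H$, is exactly the justification the paper also invokes for ``carries the identity Rep.'' What the paper's projector formulation buys is the direct link to the counting formula $a_{(I)}^{[M\otimes F]_{\eta_0}}={\rm Tr}(P^{(I)}P_{\eta_0})$ that the theorem feeds into, but as a proof of the stated theorem your version is tighter and avoids the glossed-over commutation step. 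Part 3) is essentially the paper's computation: both arguments take the adjoint of the $\eta_0$-symmetry relation $\tilde\Gamma^T=\eta_0 M(\sigma)\tilde\Gamma$, eliminate $\tilde\Gamma^*$ using $M(T_0)\tilde\Gamma^*M^T(T_0)=\tilde\Gamma$, and collapse the leftover factors with $M(T_0)M^*(T_0)=\eta_0 M(\sigma)$ together with unitarity of $M(T_0)$; your phase bookkeeping is sound, since only $|\eta_0|=1$ (not $\eta_0=\pm1$) is actually needed, which is what makes the type-II case parallel to type-I as you observe.
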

\begin{proof}
Firstly, since $V(h)=M(h)\otimes F(h)$ is a linear Rep of $H$, we define the following projection operator 
$$P^{(I)}={1\over |H|}\sum_{h\in H} V(h),$$
which projects from the product space $V(h)=M(h)\otimes F(h)$ onto the identity Rep space. Accordingly,  we have
$${1\over |H|}\sum_{h\in H} V_{\eta_0}(h)=P^{(I)}P_{\eta_0}.$$

Supposing $\tilde \Gamma$ is a common eigenvector of $V_{\eta_0}(h), h\in H$ with eigenvalue 1,  then we have
\beq\label{Veta0Gamma}
{1\over |H|}\sum_{h\in H} V_{\eta_0}(h) \tilde\Gamma&=&P^{(I)}P_{\eta_0} \tilde \Gamma=\tilde\Gamma.
\eeq
Therefore, 
\[
P^{(I)} \tilde\Gamma = P^{(I)} P^{(I)}P_{\eta_0}\tilde\Gamma=P^{(I)}P_{\eta_0}\tilde\Gamma=\tilde\Gamma,
\]
namely, $\tilde\Gamma$ is the CG coefficient coupling the direct product Rep $V(h) = M(h)\otimes F(h)$ to the identity Rep. 

On the other hand, from $P^{(I)}\tilde\Gamma=\tilde\Gamma$ and $P^{(I)} P_{\eta_0}\tilde\Gamma=\frac{1}{2}P^{(I)} (I+\mathscr T_{\eta_0})\tilde\Gamma =\tilde\Gamma$,
we have 
\[
 {1\over2} (I+\mathscr T_{\eta_0})\tilde\Gamma =\tilde\Gamma.
\]
Therefore $\mathscr T_{\eta_0}\tilde\Gamma =\tilde\Gamma $. By definition, $\mathscr T_{\eta_0}\tilde\Gamma = [\eta_0M(\sigma)\tilde\Gamma]^T$, so we have $$\tilde\Gamma^T=\eta_0 M(\sigma)\tilde\Gamma.$$ Namely, $\tilde\Gamma$ is $\eta_0$-symmetric. Especially, for type-I anti-unitary groups, $\sigma=E$, the $\eta_0$-symmetry reduces to $\tilde\Gamma^T=\eta_0\tilde\Gamma$.

Thus we have verified that the eigenvector $\tilde\Gamma$ of $P^{(I)} P_{\eta_0}\tilde\Gamma=\tilde\Gamma$ indeed has the properties 1) and 2).

Now we illustrate that the relation $V(T_0)K \tilde\Gamma = \tilde\Gamma$ can be satisfied.

Firstly we shows that $V(T_0)K$ preserves the eigenspace of $P^{(I)}$.  Noticing that $T_0(\sum_{h\in H} h) = (\sum_{h\in H} h) T_0$, and that $V(g)K_{s(g)}, g\in G$ is a linear Rep of $G$, so $V(T_0)K P^{(I)}= P^{(I)}V(T_0)K$. Therefore, if $\tilde\Gamma$ is an eigenvector of $P^{(I)}$ with $P^{(I)}\tilde\Gamma=\tilde\Gamma$, then
\[
P^{(I)} V(T_0)K \tilde\Gamma = V(T_0)K P^{(I)} \tilde\Gamma = V(T_0)K\tilde\Gamma.
\]
Namely, $V(T_0)K\tilde\Gamma$ is still an eigenvector of $P^{(I)}$.

Then we show that $V(T_0)K$ also preserves the eigenspace of $P_{\eta_0}$. From the definition of the unit twist operator $[(\mathscr T)_{kl,ij}=\delta_{kj}\delta_{li}]$, it is easily to verify that 
$$ (X\mathscr T)_{kl,ij}=X_{kl,ji}$$ and $$(\mathscr T X)_{kl,ij}=X_{lk,ij}$$ for arbitrary matrix $X$. Similarly, for a direct product matrix $(X\otimes Y)_{kl,ij}=X_{ki}Y_{lj}$, the twist operator acts as
\begin{align}
\begin{cases}
[(X\otimes Y)\mathscr T]_{kl,ij}=(X\otimes Y)_{kl,ji}=X_{kj}Y_{li}\\
[\mathscr T(X\otimes Y)]_{kl,ij}=(X\otimes Y)_{lk,ij}=X_{li}Y_{kj}
\end{cases}\notag,
\end{align}
which gives 
\begin{align}
[\mathscr T(X\!\otimes\! Y)\mathscr T]_{kl,ij}&=[(X\!\otimes \!Y)\mathscr T]_{lk,ij}\notag\\
&=(X\!\otimes\! Y)_{lk,ji}=X_{lj}Y_{ki}\!=\!(Y\!\otimes\! X)_{kl,ij}. \notag
\end{align}

Since $V(T_0)=M(T_0)\otimes M(T_0)$, we have 
\begin{align}
&[\mathscr T_{\eta_0} V(T_0)K \mathscr T_{\eta_0}]\notag \\ 
=&\eta_0\eta_0^*\mathscr T[M(\sigma)\otimes I][M(T_0)\otimes M(T_0)]\mathscr T [M^*(\sigma)\otimes I]  K \notag\\ %
=&[I\otimes M(\sigma)][M(T_0)\otimes M(T_0)][M^*(\sigma)\otimes I]  K\notag\\
=& [M(T_0)\otimes M(T_0)][M(\sigma)\otimes F(\sigma)]^* K\notag \\
=&V(T_0)K(\mathscr T_{\eta_0})^2, 
\end{align}
where we have used $(\mathscr T_{\eta_0})^2=V(\sigma)$. Therefore, if $\tilde\Gamma$ is $\eta_0$-symmetric, $i.e.$, if $\ \mathscr T_{\eta_0}\tilde\Gamma =\tilde\Gamma$,   then $\mathscr T_{\eta_0} V(T_0)K \tilde\Gamma = \mathscr T_{\eta_0} V(T_0)K \mathscr T_{\eta_0}\tilde\Gamma=V(T_0)K(\mathscr T_{\eta_0})^2\tilde\Gamma=V(T_0)K\tilde\Gamma$,
which means that
$\big(V(T_0)K\big) \tilde\Gamma$ is still $\eta_0$-symmetric.

Therefore, the eigenspace of $P^{(I)}P_{\eta_0}$ with eigenvalue 1  
is preserved under the action of $T_0$. Namely, this eigspace form a linear Rep of the anti-unitary group $G$. Noticing that $V(\sigma)=V(T_0)V^*(T_0)$ is represented as an identity matrix in this eigenspace, from lemma \ref{lemma2}, we can `diagonalize' $V(T_0)K$ as $IK$ in this subspace. Namely, we can choose proper bases such that $M(T_0) \otimes M(T_0)\tilde\Gamma^*=\tilde\Gamma$, or equivalently (\ref{T0tG}), holds. 

From the transpose of (\ref{T0tG}), we have $M(T)\tilde\Gamma^\dag [M(T)]^T=\tilde\Gamma^T=\eta_0M(\sigma)\tilde\Gamma$, where the symmetry equation has been used. Substituting $\Gamma=\tilde \Gamma M^*(T_0)$ into above equation and noticing $M(T_0)M^*(T_0)=\eta_0M(\sigma)$, we finally obtain the hermitian condition $\Gamma^\dag =\Gamma$.  Thus the property 3) has been verified.\\
\end{proof}

Noticing that the eigenspace of $P^{(I)}$ is either $\eta_0$-symmetric or $(-\eta_0)$-symmetric, so $ P^{(I)}P_{\eta_0}$ is also a projection operator 
\[
\big( P^{(I)}P_{\eta_0}\big)^2= P^{(I)}P_{\eta_0},
\]
therefore its eigenvalues are either 1 or 0. If $M(g)K_{s(g)}, g\in G$ is irreducible, then $P^{(I)}P_{\eta_0}$ has only one nonzero eigenvalue. The trace of this projection operator ${\rm Tr}\ (P^{(I)}P_{\eta_0})=1$ yields the irreduciblity condition (\ref{Irrep-I}).

\section{Derivation of (\ref{Anti}) for Type-II anti-unitary groups} \label{app:proof}

Following the same discussion of type-I anti-unitary groups, for type-II anti-unitary groups we obtain the matrix form $V_{\eta_0}(h)=\frac{1}{2} [M(h)\otimes F(h) (I + \mathscr T_{\eta_0})]$, namely
\begin{align}
[V_{\eta_0}(h)]_{kl,ij}=\frac{1}{2} \Big(&\!\! \left[M(h)\!\otimes\! F(h) \right]_{kl,ij}  \notag\\
	& +  \eta_{0}[M(h)\! \otimes\! F(h)M(\sigma)]_{kl,ji}\!\! \Big).
\end{align}
From theorem \ref{Theorem1} in appendix \ref{app:Peta0}, we can start with the equation (\ref{typeII:criterion}),
which can be expressed in terms of characters as
 \beq\label{trace_Mnew}
	1\!\!=\!\!\frac{1}{2|H|}\!\sum_{h,i,j}\!\Big(\! M_{ii}(h) F_{jj}(h) \!+\!\eta_{0} M_{ij}(h) [F(h)M(\sigma)]_{ji}\! \Big).
\eeq
 Remembering that $M(\sigma)=\eta^{-1}_{0}M(T_0)M^*(T_0)$, the second term in (\ref{trace_Mnew}) can be transformed into
\begin{align}\label{latter_term}
	\sum_{i,j}&\eta_{0} M_{ij}(h) [F(h)M(\sigma)]_{ji} =\eta_{0}{\rm Tr}[M(h)F(h)M(\sigma)]\notag \\
	=& \eta_{0}{\rm Tr}[M(h)F(h)\eta^{-1}_{0}M(T_0)M^*(T_0)]\notag \\
	=&{\rm Tr}[M(h)M(T_0)M^*(h)M^\dag(T_0)M(T_0)M^*(T_0)]\notag \\
	=&{\rm Tr}[M(h)M(T_0)M^*(h)M^*(T_0)]\notag \\
	=&\omega(h,T_0)\omega^*(h,T_0){\rm Tr}[M(hT_0)M^*(hT_0)]\notag\\
	=&\omega(h,T_0)\omega^*(h,T_0) \omega(hT_0,hT_0)\chi((hT_0)^2).
\end{align}
Finally, noticing ${\rm Tr\ } M(h)=\chi(h)$ and ${\rm Tr\ } F(h)=\chi^*(h)$, (\ref{trace_Mnew}) reduces to (\ref{Anti}), namely, 
\begin{align}
	1=\frac{1}{2|H|}\sum_{h\in H}\left[ \chi(h)\chi^*(h)  +\omega(hT_0, hT_0) \chi((hT_0)^2)\right].
\end{align}

\section{$k\cdot p$ theory: Derivation of (\ref{kdotp_Gampro_F1}), (\ref{kdotp_Gampro_F2}), (\ref{kdotp:GeneralGamma}) and Discussion for General Dispersions}\label{app:vec}
We starts with the equation (\ref{Gammak}), namely,
$$M(g)K_{s(g)} \Gamma(g^{-1}\delta \pmb k) K_{s(g)} M^\dag (g) = \Gamma(\delta\pmb  k).$$

Letting $\delta \pmb k'=g^{-1}\delta \pmb k$, then $\delta \pmb k=g\delta \pmb k'$ and (\ref{Gammak}) becomes
$$M(g)K_{s(g)} \Gamma(\delta \pmb k') K_{s(g)} M^\dag (g) = \Gamma(g \delta\pmb  k').$$
Since the summation over $\delta\pmb k'$ is equivalent to the summation over $\delta \pmb k$, therefore we have
\beq\label{B1}
M(g)K_{s(g)} \Gamma(\delta \pmb k) K_{s(g)} M^\dag (g) = \Gamma(g \delta\pmb  k).
\eeq
If there is a linear dispersion then $\Gamma(\delta\pmb k)=\sum_{m=1}^3\delta k_m\Gamma^m$. Notice that $\delta\pmb k$ varies as dual vector under the action of the unitary subgroup $H$, namely $\hat h\delta k_m = \sum_n D^{(\bar v)}_{mn}(h)\delta k_n$. Substituting these relations into (\ref{B1}) and letting $g=h\in H$, then we have 
\begin{align}
M(h)\left( \sum_{n} \Gamma^n\delta k_n \right) M^{\dag}(h)= \sum_{m,n}\Gamma^m D^{(\bar v)}_{mn}(h)\delta k_n. 
\end{align}
Thus the equation (\ref{kdotp_Gampro_F1}) is proved, $i.e.$, $ M(h)\Gamma^n M(h)^\dag=\sum_m D^{(\bar v)}_{mn}(h)\Gamma^m$.

Now consider the anti-unitary element $g=T_0$. From (\ref{B1}), we obtain 
$$
M(T_0 )K\Gamma(\delta \pmb k) K M^\dag (T_0 ) = \Gamma(T_0\delta\pmb  k). 
$$
If $T_0$ has a nontrivial action on $\delta \pmb k$, namely $T_0 \delta k_m=\sum_n D^{(\bar v)}_{mn}(T_0) \delta k_n$, then linear dispersion $\Gamma(\delta\pmb k)=\sum_{m=1}^3\delta k_m\Gamma^m$ indicates that
\Beq
M(T_0 )K \left(\sum_n \Gamma^n\delta k_n\right) K M^\dag (T_0 ) = \sum_{mn} D^{(\bar v)}_{mn}(T_0)\Gamma^m\delta k_n, 
\Eeq
which is equivalent to  (\ref{kdotp:GeneralGamma}), $i.e.$, $M(T_0) (\Gamma^{n})^* M^\dag (T_0)= \sum_m \Gamma^m D^{(\bar v)}_{mn}(T_0).$ Here we have used the fact that $\delta k_n\in \mathbb{R}$ are real numbers. (\ref{kdotp_Gampro_F2}) is a special case of (\ref{kdotp:GeneralGamma}) with $D^{(\bar v)}_{mn}(T_0)=I$.

Similar discussion can be generalized to the case when the vector Rep is reducible, or to the cases where the dispersions are of higher order. Generally, the object $\sum_n\Gamma^n\delta k_n$ can be replaced by $\sum_{i}\Gamma^{(N)}_i P_i^{(N)}(\delta\pmb k)$, where 
\[
P^{(N)}_i(\delta\pmb k) = \sum_{a+b+c=N} f^{(N)}_{i(abc)}\delta k_1^a\delta k_2^b\delta k_3^c,\ \ f_{abc}\in\mathbb R
\]
belongs to a set of order-$N$ homogeneous polynomials of $\delta k_{1},\delta k_2,\delta k_3$ which vary under the rule of irreducible linear Rep of $G$. 

For instance, in the case $H=\mathscr C_{6v}$, the quadratic polynomials $(P^{(2)}_1, P^{(2)}_2)^T=(k_x^2-k_y^2, 2k_xk_y)^T$ vary as a two-component column vector under the irreducible Rep $(E_2)=(\bar E_2)$, namely, 
\Beq
&&h P^{(2)}_i(\delta \pmb k) = \sum_j D^{(\bar E_2)}_{ij}(h) P^{(2)}_j(\delta \pmb k),\\
\Eeq
for $h\in \mathscr C_{6v}$, and 
\Beq
&&T_0 P^{(2)}_i(\delta \pmb k) = \sum_j D^{(\bar E_2)}_{ij}(T_0) P^{(2)}_j(\delta \pmb k).
\Eeq
Accordingly,  similar to (\ref{kdotp_Gampro_F1}) and (\ref{kdotp_Gampro_F2}) we have
\Beq
&&M(h)\Gamma^{(2)}_i M(h)^\dag=\sum_j D^{(\bar E_2)}_{ji}(h)\Gamma^{(2)}_j,\\
&&M(T_0)\Gamma^{(2)*}_i M(T_0)^\dag=\sum_j D^{(\bar E_2)}_{ji}(T_0)\Gamma^{(2)}_j.
\Eeq
The existence of quadratic dispersion terms with the form $\sum_{i=1}^2 \Gamma^{(2)}_i P_i^{(2)}(\delta\pmb k)$ can be judged using the formula (\ref{general_form_final}) by replacing the vector Rep $(v)$ with the linear Rep $(E_2)$. 

Applying the method introduced in section \ref{sec:disper_kpHam}, we can obtain the matrices $\Gamma^{(2)}_{1,2}$.

\section{$k\cdot p$ theory: hermiticity of $\Gamma^{1,2,3}$ }\label{app:condition}

We define the projection operator
\[
\mathbb P^{(I)} ={1\over|H|} \sum_{h\in H}  W(h)
\]
which project onto the subspace of identity Reps in the product Rep $W(h)=D^{(v)}(h)\otimes M(h)\otimes F(h)$. Similarly, 
\[
{1\over|H|} \sum_{h\in H}  W_{\eta_0}(h) =\mathbb P^{(I)}\mathbb P_{\eta_0}. 
\]
Following the discussion in Appendix \ref{app:Peta0}, the eigenvector $\tilde \Gamma$ of $\mathbb P^{(I)}\mathbb P_{\eta_0}$ with 
\[
\mathbb P^{(I)}\mathbb P_{\eta_0} \tilde \Gamma=\tilde \Gamma
\]
is also an eigenvector of $\mathbb P^{(I)}$ and $\mathbb P_{\eta_0}$, namely, it is a $\eta_0$-symmetric vector which carries the identity Rep of $H$.

Furthermore, referring to Appendix \ref{app:Peta0} and noticing the facts $W(T_0)K\mathbb P^{(I)}=\mathbb P^{(I)}W(T_0)K$ and $\mathbb T_{\eta_0} W(T_0)K \mathbb T_{\eta_0} = W(T_0)KW(\sigma)=W(T_0)K\mathbb T_{\eta_0}^2$, it can be verified that $W(T_0)K$ preserves the eigenspace of $\mathbb P^{(I)}$ and $\mathbb P_{\eta_0}$.
Namely, the eigenspace $\mathcal L_1$ of $\mathbb P^{(I)}\mathbb P_{\eta_0}$ with eigenvalue 1 is closed under the action of  any $g\in G$, hence forms a linear Rep space of $G$. From the lemma \ref{lemma2} and its corollary, we can choose proper bases such that each basis carries the identity Rep of $G$, namely, we can always find the bases of $\mathcal L_1$ such that $T_0$ is represented as $\mathscr M(T_0)K=IK$.\\

Now it is ready to prove that the matrices $\Gamma^{1,2,3}$ constructed from $\tilde\Gamma$ are hermitian matrices.\\

\begin{corollary}
If $\tilde\Gamma$ satisfy the relations (\ref{WGm}) and (\ref{Teta0}), then the resultant matrices $\Gamma^{1,2,3}$ are hermitian, where $\tilde\Gamma$ and $\Gamma^m$ are related by $(\tilde \Gamma)_{n\times d^2+i\times d+j} = (\tilde\Gamma^n)_{ij}$ and $\Gamma^m =   \tilde \Gamma^m M^*(T_0)$. 
\end{corollary}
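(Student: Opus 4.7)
The plan is to prove the hermiticity $(\Gamma^m)^\dag = \Gamma^m$ directly, by translating the two hypothesized identities on $\tilde\Gamma$ back to relations on the individual block matrices $\tilde\Gamma^m$ and then combining them. First I would unpack the generalized twist relation $\mathbb T_{\eta_0}\tilde\Gamma = \tilde\Gamma$ using $\mathbb T_{\eta_0}=D^{(v)}(T_0)\otimes\mathscr T_{\eta_0}$ and $\mathscr T_{\eta_0}=\eta_0\mathscr T[M(\sigma)\otimes I]$, which yields the componentwise identity
\[
(\tilde\Gamma^m)^T \;=\; \eta_0\,M(\sigma)\sum_{n} D^{(v)}_{mn}(T_0)\,\tilde\Gamma^n.
\]
In parallel, $W(T_0)K\tilde\Gamma=\tilde\Gamma$ expanded via $W(T_0)=D^{(v)}(T_0)\otimes M(T_0)\otimes M(T_0)$ gives
\[
\tilde\Gamma^m \;=\; \sum_{n} D^{(v)}_{mn}(T_0)\,M(T_0)\,(\tilde\Gamma^n)^*\,M^T(T_0).
\]

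Next I would take the hermitian conjugate of the first identity. Because the vector representation is real, $D^{(v)}_{mn}(T_0)\in\mathbb R$ and $\eta_0^*=\eta_0$, so conjugation gives $(\tilde\Gamma^m)^\dag=\eta_0\,M^*(\sigma)\sum_n D^{(v)}_{mn}(T_0)(\tilde\Gamma^n)^*$. I would then left-multiply by $M^T(T_0)$ and simplify using two facts: the cocycle relation $M(T_0)M^*(T_0)=\eta_0\,M(\sigma)$, which rewrites $M^*(\sigma)=\eta_0^{-1}M^*(T_0)M(T_0)$, and the unitarity identity $M^T(T_0)M^*(T_0)=I$. This collapses the right-hand side to
\[
M^T(T_0)(\tilde\Gamma^m)^\dag \;=\; M(T_0)\sum_{n} D^{(v)}_{mn}(T_0)\,(\tilde\Gamma^n)^*.
\]
Finally, starting from the second unpacked identity and using $M^T(T_0)M^*(T_0)=I$ once more, I would compute
\[
\Gamma^m \;=\; \tilde\Gamma^m M^*(T_0) \;=\; \sum_{n} D^{(v)}_{mn}(T_0)\,M(T_0)\,(\tilde\Gamma^n)^*,
\]
which is precisely the same expression. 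Since $(\Gamma^m)^\dag=[\tilde\Gamma^m M^*(T_0)]^\dag = M^T(T_0)(\tilde\Gamma^m)^\dag$, this establishes $(\Gamma^m)^\dag=\Gamma^m$.

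The main obstacle I expect is bookkeeping rather than conceptual difficulty: one must ensure that the indices on $D^{(v)}_{mn}(T_0)$, the scalar $\eta_0$, and the various unitarity/transpose relations for $M(T_0)$ are applied consistently so that both sides collapse to the same closed expression. Concretely, the $M(\sigma)$ factor produced by the generalized twist and the $M^T(T_0)$ tail arising from the $T_0$-invariance must be eliminated using the single identity $M(\sigma)=\eta_0^{-1}M(T_0)M^*(T_0)$, after which the whole argument reduces to two applications of $M^T(T_0)M^*(T_0)=I$. Once these simplifications are carried out, the $\sigma$-dependence and the $\eta_0$ cocycle drop out automatically and hermiticity becomes manifest.
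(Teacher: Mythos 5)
Your proposal is correct and follows essentially the same route as the paper's own proof: unpack $\mathbb T_{\eta_0}\tilde\Gamma=\tilde\Gamma$ and $W(T_0)K\tilde\Gamma=\tilde\Gamma$ into the componentwise relations, complex-conjugate the twist relation, eliminate $M(\sigma)$ via $M(T_0)M^*(T_0)=\eta_0 M(\sigma)$ and $M^T(T_0)M^*(T_0)=I$, and compare the two resulting expressions to get $M^T(T_0)(\tilde\Gamma^m)^\dag=\tilde\Gamma^m M^*(T_0)$. The only cosmetic issue is calling the operation on the transposed identity a ``hermitian conjugate'' when it is an entrywise complex conjugation, but the displayed equation is the correct one and the argument goes through.
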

\begin{proof} 
Equation (\ref{Teta0})  indicates that $\tilde\Gamma^m$ satisfies the symmetry condition
\begin{align}\label{app3:symEq}
(\tilde\Gamma^m)^T=\eta_0\sum_n D^{(v)}_{mn}(T_0)M(\sigma)\tilde\Gamma^n.
\end{align}
Taking complex conjugation, above equation becomes
\Beq
(\tilde \Gamma^m)^\dag &=& \sum_n \eta_0^*  D_{mn}^{(v)*}(T_0) M^*(\sigma)(\tilde \Gamma^n)^*\\
&=&M^*(T_0)M(T_0) \sum_{n} D_{mn}^{(v)*}(T_0) (\tilde \Gamma^n)^*,
\Eeq
which yields
\Beq
M^T(T_0)(\tilde \Gamma^m)^\dag =\sum_{n} D_{mn}^{(v)*}(T_0)M(T_0)  (\tilde \Gamma^n)^*.
\Eeq

On the other hand, the second equation in (\ref{WGm}) is equivalent to (\ref{tildeGaT}), which indicates that
\Beq
\tilde \Gamma^mM^*(T_0) =  \sum_{k}D^{(v)}_{mk}(T_0)M(T_0) (\tilde\Gamma^k)^*.
\Eeq
Since the vector Rep is a real Rep, comparing above two equations we have $$M^T(T_0)(\tilde \Gamma^m)^\dag = \tilde \Gamma^mM^*(T_0),$$ namely, $\big(\Gamma^m\big)^\dag=\Gamma^m$. This completes the proof. 
\end{proof}

\bibliography{Ham_CG_Ref}

\end{document}